\newcommand{\R}{\color{red} }
\newcommand{\B}{\color{blue}}
\theoremstyle{break} 
\newtheorem{definition}{Definition}
\newtheorem{theorem}{Theorem}
\newtheorem{lemma}{Lemma}
\newtheorem{proposition}{Proposition}
\newenvironment{proof}[1][Proof]{\noindent\textbf{#1. }}{\ \rule{0.5em}{0.5em}}
\newcolumntype{L}[1]{>{\raggedright\let\newline\\arraybackslash\hspace{0pt}}m{#1}}
\newcolumntype{C}[1]{>{\centering\let\newline\\arraybackslash\hspace{0pt}}m{#1}}
\newcolumntype{R}[1]{>{\raggedleft\let\newline\\arraybackslash\hspace{0pt}}m{#1}}
\begin{document}

\begin{titlepage}
\title{Staking Pools on Blockchains\thanks{We would like to thank Jihao Zhang, Philémon Bordereau, Christian Matt, Christopher Portmann, Daniel Tschudi, Christian Ewerhart, Stefanos Leonardos, Jiasun Li, Luyao Zhang,  participants of the Conference on Mechanism and Institution Design 2022 and the 33rd Stony Brook International Conference on Game Theory for valuable feedback.}}

 \author{
 	Hans Gersbach\\
 	\normalsize CER-ETH and CEPR\\
 	\normalsize Z\"{u}richbergstrasse 18\\
 	\normalsize 8092 Zurich, Switzerland\\ 
    \normalsize \href{mailto:hgersbach@ethz.ch}{hgersbach@ethz.ch}
    \and
 	Akaki Mamageishvili\\
 	\normalsize CER-ETH\\
 	\normalsize Z\"{u}richbergstrasse 18\\
 	\normalsize 8092 Zurich, Switzerland\\ 
	\normalsize \href{mailto:amamageishvili@ethz.ch}{amamageishvili@ethz.ch}
 	\and
 	Manvir Schneider\\
 	\normalsize CER-ETH\\
 	\normalsize Z\"{u}richbergstrasse 18\\
 	\normalsize 8092 Zurich, Switzerland\\ 
 	\normalsize \href{mailto:manvirschneider@ethz.ch}{manvirschneider@ethz.ch}
 	}

\date{September 2022}

\maketitle

\begin{abstract}
On several proof-of-stake blockchains, agents engaged in validating transactions can open a pool to which others can delegate their stake in order to earn higher returns. We develop a model of staking pool formation in the presence of malicious agents and establish existence and uniqueness of equilibria. We then identify potential and risk of staking pools. First, allowing for staking pools lowers blockchain security. Yet, honest stake holders obtain higher returns. Second, by choosing welfare optimal distribution rewards, staking pools prevent that malicious agents receive large rewards. Third, when pool owners can freely distribute the returns from validation to delegators, staking pools disrupt blockchain operations, since malicious agents attract most delegators by offering generous returns.
	\vspace{0.2in}\\
	\noindent\textbf{Keywords:} Delegation, Staking Pool, Blockchain, Governance\\
	\noindent \textbf{JEL Classification:} C72, D02, D60, G23\\
	 
\end{abstract}

\thispagestyle{empty}
\end{titlepage}
\setcounter{page}{2}


\section{Introduction}

From both the financial side and the security side, there are reasons why a proof-of-stake (PoS) blockchain may want to allow the formation of staking pools. With staking pools, agents interested in validating transactions are allowed to open a pool,  such that others can delegate their stake for some time to it.   
For delegating agents who are not interested in validating transactions, this can provide an additional income on their token holdings. In turn, their stakes are blocked and cannot be used for other purposes during the time of commitment. 
By agents we mean all participants of the decentralized system who own some stake, which usually takes the form of a native token.  
Agents interested in running a staking pool could earn a higher income from their transaction validation activities.\footnote{An agent that decides to run a pool and validate transactions is usually referred as a node of a network.} 

Ideally, such a staking system makes it more attractive to hold tokens, provides incentives for a sufficient number of agents to run staking pools and act as transaction validators, and  increases the share of honest agents, weighted by the stakes they hold, involved in transaction validation.  As every staking pool acts as a validator, we occasionally use the word ``validator'' for a staking pool.

However, malicious agents also run staking pools and may thus enlarge the share of the stake they control in transaction validation.\footnote{Our assumption that malicious agents always run staking pools is justified, since if they do not, they do not affect blockchain functioning and thus cannot be considered malicious. If they delegate their stakes, they act like honest delegators and thus again are not malicious in any way.} This may undermine the security of the blockchain and lead to a collapse of the protocol, as the malicious agents take over. 

We explore how such a system can be modeled and designed, so that it operates beneficially for the decentralized consensus mechanism---i.e. by lowering the share of malicious agents who disrupt the validation of transactions---and for the ecosystem as a whole. The model invokes a measure of honest agents who are interested in the returns from holding a stake (of tokens) in a PoS blockchain and thus are also interested in the proper functioning of the blockchain. An agent is honest  if s/he is prepared to run the software for validation, as required by the system. 
Otherwise, honest agents choose actions to maximize their expected returns.
The PoS protocol that we consider in this paper perfectly mimics how the proof-of-work (PoW) protocol works. The PoW protocol chooses the next block writer according to who finds the nonce that satisfies certain conditions, that is, the probability that the next writer is chosen proportionally to his/her {\it hash rate}. Similarly, in the PoS protocol, the next writer of the block is chosen proportionally to his/her stake size. In the PoW protocols, the costs are typically assumed to be different across agents, as they depend on electricity, hardware, and maintenance costs. In the PoS protocol, we have a similar situation, except for electricity costs. Agents have different costs in participating in transaction validation, as availability of appropriate computer software and hardware, speed and bandwidth of the internet, the knowledge of how to run a secure validation node, and opportunity costs to engage in validation activities(that is, costs of not being able  to use tokens for other purposes) differ across them. The difference between PoW and PoS protocol pools is the following. In the former, the costs are incurred by all pool members, and therefore the rewards are distributed proportionally. In the latter, the costs are incurred only by the pool runner, while the rest---the delegators---incur no cost. In this paper, we examine the simplest reward distribution, where the pool runner keeps some fraction of the rewards and distributes the rest to the delegators proportionally. 

There is a measure of malicious agents who are only interested in disrupting the blockchain, and therefore, their costs are ignored by the designer. 

A {\it Blockchain Designer} aims at maximizing the chance that the blockchain is working (maximizing blockchain security), which will be captured by maximizing the number of honest validators. 
We will also consider an alternative objective where the Blockchain Designer trades off the probability that the blockchain is running correctly with the costs for all honest agents of validating transactions. This is a standard economic welfare criterion.
  Two further aspects can be important for a Blockchain Designer: Reducing the rewards for malicious agents, as this decreases their future influence and distributing the rewards to validators as equally as possible. 

The Blockchain Designer has two basic options when designing the market for staking pools. First, s/he can fix the return distribution between the pool owner and the pool delegators. We call this ``return fixing".
Second, s/he can allow competition of pool runners regarding how the returns from transaction validation are shared between the pool owner and the pool delegators. This is called ``return competition". 

We model the ensuing interaction as a three-stage game. In the first stage, agents decide whether (i) to open a staking pool, (ii) to delegate their stake to some pool or (iii) to abstain from validation activities. Setting up pools for validating transactions is costly, and these costs may differ between agents.  
In the second stage, either the Blockchain Designer determines the shares uniformly for all running pools (return fixing) or pool owners determine how returns should be shared between pool owners and delegators (return competition). In the third stage, transactions are validated and, depending on the share of stakes controlled by malicious agents, validation either works properly or the blockchain is disrupted. 

Our main insights start from the observation that honest agents with high costs to set-up a node as a validator may want to delegate their stake to other pool owners, while honest agents with low costs may want to open their own pool. 
Malicious agents always open a pool, as this increases their chances to disrupt the blockchain.

We establish existence and uniqueness of equilibria of the stake pool formation game with fixed return distribution between pool owners and delegators and show that they are of the threshold type. We show next that there exists a unique sharing rule of the returns from validation between delegators and pool owners that maximizes the probability that the blockchain operates correctly and we do the same for maximizing welfare of honest agents when costs of running staking pools are taken into account in addition to blockchain security. Then, we provide numerical illustrations of the equilibria for uniform distribution of costs. 

Subsequently, we identify the potential and risk of staking pools.
First, staking pools can never increase current blockchain security over a system in which no such pools are allowed. The reason is as follows. Without staking pools,  a share of honest agents  participates in validating transactions, as the additional reward is higher than the costs. With staking pools and if the returns are shared with the validators, the return to pool owners declines, as the average rewards from validating transactions is given. Hence, less honest agents are willing to open staking pools, so that malicious agents will control a larger share of stakes in validating transactions~\footnote{The return splitting cannot be strictly enforced by the protocol (designer) itself, but should rather be a recommendation. Therefore, if some pool owner publicly offers more rewards to delegators than  is recommended, it should be a signal that this pool owner is not honest.}. 

Yet, by optimally choosing the distribution of the validation returns to delegators, the allocation of rewards to honest stakes involved in validation increases, which may be beneficial for subsequent blockchain operations. We show how return splitting between pool owners and delegators has to be determined in order to minimize the rewards to malicious agents. 

Second, by choosing welfare optimal distribution rewards, staking pools may decrease blockchain security, but it prevents allocating large rewards to only a fraction of agents.

 Third, when pool owners can freely distribute the returns from validation to delegators, staking pools decrease blockchain security, since malicious agents attract delegators by distributing most of the returns to them.  

Finally, we show how our results can be extended to situations in which not only running a pool but also the act of delegation is costly. Moreover, as we use a continuum model for tractability reasons, we show how our analysis can be recast---albeit with a more complex formal apparatus---in a discrete setting.  

The paper is organized as follows. In the next section, we discuss the related literature. In particular, literature motivating the formation of staking pools in PoS blockchains is reviewed. In Section~\ref{sec: model}, we introduce the model and preliminaries. In Section~\ref{sec: eq_analysis}, we analyze the equilibria of the fixed return game. In Section~\ref{sec: reward_design}, we discuss designs for staking rewards that either maximize security or maximize welfare. Section~\ref{sec: numerical} addresses a uniform cost distribution and provides numerical examples. In Section~\ref{sec: competition}, we analyze a return competition game, where pool owners individually decide on the reward sharing scheme. In Section~\ref{sec: extensions}, we study extensions of our basic model. Section \ref{sec: conclusion} concludes.

\section{Related Literature}\label{sec: literature}

 \textit{Staking Pools:} 
 Many blockchains have already implemented staking or will implement it in the near future. Such examples include, but are not limited to Cardano (\cite{ouroboros}), Solana (\cite{Yakovenko2017}), Polkadot (\cite{Wood2016}), Tezos (\cite{Goodman2014}) and Concordium. 
 All these allow staking pools in which agents who do not run their own staking pool will be able to delegate their stake to an existing pool and benefit from rewards. By delegation, agents are indirectly involved in block proposal and validation, via their stake.
 
\cite{Brunjes2020} study staking pools among honest agents from an interesting mechanism design perspective. Their reward scheme ensures that a desired number of staking pools is achieved while each pool has approximately the same amount of stake and low-cost agents are running the pools. The reward scheme ensures that reporting the true costs is the dominant strategy. 
Our paper is complementary as we focus on the design of staking pools in the presence of malicious agents who want to disrupt the blockchain and thus have quite different objectives than honest agents. We examine on how such staking pools affect blockchain security, how security risks can be alleviated and how distribution of rewards to malicious agents can be limited. Our mechanism is also simpler to implement, as there is no communication between the designer and pool runners, and therefore, no need for contracting, unlike in~\cite{Brunjes2020}. It is also intuitive to interpret for the agents, than the generic mechanisms studied in there.

\textit{Blockchain and Consensus Protocols:} There is a rapidly growing literature on blockchains and consensus protocols. The most common reference is~\cite{Nakamoto2008}, a Bitcoin whitepaper, in which the consensus protocol PoW is introduced. Many papers focusing on consensus protocols, that have been published recently, include~\cite{John2020} and~\cite{Benhaim2021}. PoS protocols were introduced to avoid the significant use of energy associated with  PoW protocols. In this paper, we adopt a widely used PoS protocol, in which the next validator is chosen proportionally to the available stake.  
A  major paper on the game-theoretic analysis of blockchain and its folk theorem  is~\cite{Biais2019}.~\cite{Benhaim2021} and~\cite{Amoussou2020} consider games in the presence of malicious agents. This aspect is also part of our model. In particular, we argue that the probability that the blockchain is well-functioning is increasing with the share of honest validators and achieves complete security (probability one) when the share of malicious agents falls below a certain threshold. {More specific work on blockchain mining rewards from a game-theoretic angle are studied in~\cite{Chen2019} and~\cite{Kiayias2016}.}

\textit{Model Assumptions:}
\cite{Herrera2014} study a voting game with two parties where voters vote for one party or abstain. In their model, voters have individual costs that are drawn according to some distribution function. Similarly to our paper,~\cite{Herrera2014} characterize equilibria of their game which take the form of so called ``cut-off thresholds''. They obtain a pair of thresholds, one for each party. Then, a citizen whose cost is below the corresponding threshold will turn out and vote for his/her party. If a citizen's cost is above the threshold, s/he will abstain. Our model works in a similar manner. We characterize threshold equilibria such that if the cost for an individual is below that threshold, s/he will run a staking pool, and  delegate or abstain otherwise.

Similar to~\cite{Herrera2014},~\cite{Castanheria2003} studies a voting game where citizens' costs are drawn from a uniform distribution. In our paper, we also consider a uniform distribution for costs. One key difference, however, is that the costs in~\cite{Herrera2014} and~\cite{Castanheria2003} are costs of voting\footnote{See the rational and costly voting literature (\cite{Palfrey1983}, \cite{Ledyard1984}, \cite{Borgers2004} and \cite{Gersbach2021a}).}, whereas in our paper, the costs are associated with running a pool and the  costs connected to it.

Apart from direct mining rewards for the next block miner, there are also rewards coming from auctions, where users bid on their transactions to be included to the next block. The rewards from auctions may sometimes even be higher than mining rewards, see~\cite{MEV_rewards} for the extended treatment of maximal extractable value (MEV) and its role on the security of blockchain protocols. There are even calls for aligning rewards from MEV with rewards from mining~\footnote{https://notes.ethereum.org/cA3EzpNvRBStk1JFLzW8qg\#Committee-driven-smoothing}, which would result into scaling up the reward parameter of our paper.

In our paper, we use continuum approach to model the measure of agents, and thus follow the approach in~\cite{Gersbach2009} and~\cite{Halaburda2021}, for instance. More concretely, agents are modeled as infinitely small. The continuum approximates large communities and it proves to be a tractable approach for the staking pool formation game. The continuum model is thus a limiting case where the number of agents becomes large and delegation to staking pools is done uniformly at random. In our model, besides a pool ID (or address), no further information is provided to delegators. From addresses or pool IDs, no information about pool owners can be inferred. Hence, every pool has equal chances to be chosen by agents.\footnote{In practice, delegators may have more information about pool owners, but they remain anonymous. On the Cardano blockchain, for example, agents can find all staking pools on \url{pool.pm}, which visualizes all staking pools with the pool IDs, the current stake of the pools, the number of delegators and which blocks were produced by which staking pool.} In the basic version of the model, delegation entails no cost. 

 \textit{Crypto-Democracy and Vote Delegation:} There is an extensive research on democracy and blockchains, such as voting on blockchain (see~\cite{Leonardos2020},~\cite{Osgood2016} and~\cite{Allen2017}, for instance). Very recent work on vote delegation in the presence of malicious agents is~\cite{Gersbach2021b}. The delegation of votes can be seen as a delegation of stakes in the blockchain environment. Further literature in the field of vote delegation is known under ``liquid democracy", where besides voting and abstaining, agents have the additional option to delegate their vote to other agents (see~\cite{Bloembergen2019}, for example).

 \textit{Automated Market Maker (AMM):} Staking is also used in other contexts such as yield aggregation\footnote{Under yield aggregation, investors can passively earn rewards by sending tokens to reward generating smart contracts (see~\cite{Cousaert2021}).} and automated market maker. The analogy to staking pools is the following. There are liquidity providers who add funds to a liquidity pool that is managed by an AMM. The liquidity providers correspond to agents who delegate their stake in our model. Hence, the AMMs collect funds in a liquidity pool that can be traded later.\footnote{Note that the liquidity pools hold at least two different assets, which is a major difference to staking pools.} In return, liquidity providers receive tokens proportional to their staked amount, which correspond to the rewards in our model. Besides these analogies, AMMs have other tasks as well, such as determining prices for traders. A well-known AMM is Uniswap, for example (see~\cite{Adams2020}).
 
 \textit{PoW and Mining Pools:} PoW has been studied extensively since the rise of Bitcoin \cite{Nakamoto2008}. While in PoS protocols, the next validator is chosen randomly based on their amount of stakes, PoW protocols use computational power. That is, the chance that a miner will mine the next block is proportional to his/her computational power\footnote{As a measure of computational power, one typically uses the hash rate.} relative to the total computational power of all miners. Similar to PoS protocols, miners in PoW protocols can form a pool to gather more computational power and hence increase the chance to mine the next block. Such mining pools have been analyzed game-theoretically in~\cite{Lewenberg2015},~\cite{Leonardos2019},~\cite{Cong2020},~\cite{Cheung2021},~\cite{Arnosti2022} and~\cite{Chatzigiannis2022}, for example. 
 ~\cite{Cong2020} study mining pools and focus on risk sharing as centralizing force. In their model, miners are modelled as a continuum who can invest a hash rate.~\cite{Arnosti2022} study a model where miners have heterogeneous costs and show that asymmetries in costs lead to concentration of mining power. Different levels of electricity costs, for example, are a source of heterogeneity of costs. In PoS heterogeneity of costs of validators to run a pool can arise because of different skills and computer facilities to set up a validator node and offering it as a continuously running validator and pool for other stake holders and different operation costs of validation. ~\cite{Chatzigiannis2022} models miners as rational agents who distribute their power across pools and across different cryptocurrencies. Reward functions for mining pools are studied in e.g.~\cite{Schrijvers2017}.~\cite{Fisch2017} study optimal mining pooling strategies in PoW blockchains.

\section{Model}\label{sec: model}

\subsection{The General Set-up}

There is a continuum of measure $H$ of \textit{honest} agents\footnote{These agents are called ``rational" by other authors, e.g. in \cite{Amoussou2020} and~\cite{Halaburda2021}.} and there is a continuum of measure $M$ of \textit{malicious} agents. The continua can be represented by intervals on the real line, with length $H$ and $M$, respectively. Working with a continuum of agents models a blockchain with a large number of participants and approximates the corresponding discrete model. Indeed, we show in Section~\ref{Discrete_Case}, how the analysis---albeit in a much more complicated way---can be performed in a discrete framework and how we can recover results in that framework. While the  continuum model is much more tractable and yields simpler expressions, it requires more subtle interpretations, though.

We assume $H > M$, so that, honest agents are in the majority. Each agent (malicious or honest) has one unit of the stake.\footnote{As the total amount of stakes is infinite, all variables which are integrated over the set of agents are averages in the continuum model.}  

Each honest agent is identified by his/her cost level for validation of transactions, respectively, for running a staking pool on the blockchain.\footnote{Costs include, for example, the costs for registering, running the software and forwarding messages on transactions.} Costs are denoted by $c$ and are heterogeneous across honest agents, as they depend on the availability of appropriate computer capital and human capital. Let the random variable $X$ correspond to the costs for honest agents. Specifically, the costs for honest agents are distributed according to the atomless density function $f(c)$ defined on $[0, T)$. Note that the support interval can be $\mathbb{R}_+$, that is, $T$ can be equal to $\infty$.  The corresponding cumulative distribution function is denoted by $F(c)$. 
Malicious agents are of the Byzantine type and do not care about the costs and returns of running a pool. Hence, we set their costs to zero.

There is also a reward $R\in \mathbb{R}_+$, paid for creating the next block\footnote{In some contexts it is called mining reward.}. The agents' types are private information. As assumed above all honest and malicious agents have the same amount of stakes, equal to one unit. There is also a \textit{Blockchain Designer}. The blockchain is assumed to be functioning better if more of validators are honest.

\subsection{Objectives}

The Blockchain Designer and the two types of agents have the following general objectives:

\begin{itemize}
    \item Maximize the chance that the blockchain is working (maximizing blockchain security), which will be captured by a  maximization of the number of honest validators (\textbf{Blockchain Designer}). 
    \item Maximize expected reward minus cost (\textbf{Honest Agents}).
    \item Maximize the measure of stakes delegated to them (\textbf{Malicious Agents}).
\end{itemize}

 We will formally specify  the  manifestation of these objectives later. As to the Blockchain Designer, we will consider an alternative objective where s/he trades-off the probability that the blockchain is running correctly with the costs for all honest agents to validate transactions. 
 While we consider maximizing blockchain security as the most important objective, arguably one could also consider a standard economic welfare criterion as a guiding principle for the Blockchain Designer. 
 
 Two further aspects can be important for a Blockchain Designer. First, s/he may aim at minimizing the rewards received by malicious agents, as this decreases their future influence. Second, the Blockchain Designer may want to distribute rewards to validators as equally as possible, which is an original motivation of staking. We will discuss to which extent these aspects materialize when we present our results.  

\subsection{Staking Pool Formation Game}

We consider the following game, which consists of three stages: 

\begin{itemize}
    
    \item[ ] Stage 1: Agents decide either to form a staking pool or not (both honest and malicious). Agents who decide to become a pool owner obtain an identification number, denoted by $i$.
    
    \item[ ] Stage 2: Agents who did not register for a staking pool decide whether to delegate their stake to some staking pool or to remain idle. 
    
    \item[ ] Stage 3: The blockchain runs, validation takes place (or not), and rewards are distributed. 
\end{itemize}

If an agent $i$ forms a staking pool, we denote by $s_i$ the amount of stakes s/he is receiving. We also denote by $P$ the measure of honest agents who form a staking pool. $D$ and $I$ denote the measure of honest agents who delegate their stakes or stay idle, respectively. We have $H = P + D + I.$

Our main assumption for this game is that delegators distribute themselves evenly across all possible pools. The rationale is that  the type of a pool owner is private information, and for delegators, pool owners  are all alike. Hence, invoking measure consistency, this assumption implies
\begin{equation*}
    s_i = \frac{D}{P+M}, \forall i.
\end{equation*} 

We note that all pools obtain the same amount of delegated stakes and thus we write $s$ for $s_i$ in the following. The total size of the pool---stakes of the pool owner and delegated stakes---is then $s+1$. We note that $(s+1)(P+M) + I = D + P + M + I = M + H$.

The payoffs are determined as follows:
The next validator\footnote{In blockchains, transaction validation is done by creating a new block.} is chosen among the available pools proportionally to the pool size. This particular rule is already implemented in major PoS protocols (e.g., see~\cite{ouroboros}), as it perfectly replicates the PoW protocol, where probability for the next leader (in our case next validator who writes the block) is proportional to  computing powers. Such a system has the advantage that splitting and pooling of the stakeholders does not increase the chances to be chosen as a next validator as the expected return is unaffected by such strategies.

The reward for the next block is given by an amount $R$. Since all pools have the same size, the return distribution is a uniform distribution with density $\frac{1}{P+M}$. Hence, the individual reward a pool expects to receive is $r= \frac{R}{P+M}$. Since we have a continuum model, we note that both the individual return for an individual and the cost of running a pool have zero weight in the average return $R$ and the average amount of costs, respectively.  Yet for an agent, only the individual returns and costs matter.    

The blockchain designer sets a parameter, denoted by $\lambda$, $0 < \lambda \leq 1$, which determines how rewards have to split between pool owners and delegators.
To sum up, the individual expected rewards are as follows: 

\begin{itemize}
    \item A pool receives $r= \frac{R}{P+M}$.
    
    \item The pool owner obtains $\lambda \cdot  r$.
    
    \item An individual delegator obtains $\frac{(1-\lambda) \cdot  r}{s}$. The total amount given to the delegators in a pool is $(1-\lambda) \cdot  r$.
    
    \item An idle agent obtains 0. 

\end{itemize}

\noindent We measure the probability that the blockchain operates correctly by a function $$P_c(\cdot):[0,T]\rightarrow [0,1],$$ that depends on the share of honest agents running staking pools, that is (weakly) increasing as a function of this share, and which may reach probability one if a sufficient share of honest agents is participating in validating transactions. 
Later in this paper, we will study two different versions of this probability function and reward schemes that depend on it. 
In the basic version of the staking pool formation game, we assume that the returns are paid, no matter whether the blockchain operates correctly or not. The motivation for this assumption is as follows: Whether or not the blockchain operates correctly may not be immediately detected or agents maybe able to sell their rewards immediately after writing the next block. Hence, agents involved in validating transactions aim  at maximizing the immediate returns from these activities in such cases. The Blockchain Designer is, of course, interested in how well the blockchain is functioning.  
In Section~\ref{sec: endogenous_rewards}, however, we make rewards dependent on the operation of blockchains, that is, rewards are only distributed if the blockchain operates correctly.

The main design parameter $\lambda \in [0,1]$ is a non-negative real number set by the Blockchain Designer. 
A game with a payoff structure as above, together with $(H,M,R,F,\lambda)$, is called a ``staking pool formation game'' and is denoted by $\mathcal{G}$.

\section{Equilibrium Analysis}\label{sec: eq_analysis}
In this section, we analyze the equilibria of a staking pool formation game.

\subsection{Equilibrium Concept}
In a staking game with non-zero reward $R$, an honest agent will never stay idle. The reason is that since delegation is free of cost and delegators earn some reward, the expected payoff is positive for a delegator, whereas the payoff for staying idle is zero. Only in the case $\lambda = 1$, agents would be indifferent between delegation and staying idle. Hence, delegation weakly dominates staying idle and so $H = P + D$.

In Section~\ref{sec: costlydelegation}, we will consider the case where delegation comes at a small fixed cost and so, staying idle will not be dominated in general. For tractability we assume a tie-breaking rule. First, if an honest agent is indifferent between delegating and staying idle, s/he will choose to delegate. This implies that when there are no delegation costs, no agent remains idle.

We now proceed by focusing on equilibria of the threshold type. In particular, we solve for the threshold equilibrium by looking at the agent with a specific cost level $c^*$ at which this agent is indifferent between running a pool and delegating, i.e., the expected utility from delegating is equal to the expected utility from running a pool. 


An important remark is in order. If a threshold equilibrium exists, it is unique. If a threshold equilibrium does not exist, we end up in a corner solution---either no honest agent will run a pool or all honest agents will run a pool, and thus there will be no delegation. 

We introduce the following definition: 

\begin{definition}[Threshold Equilibrium]
A cost level $c^*>0$ is called a ``threshold equilibrium'' if an agent with cost $c^*$ is indifferent between running a staking pool and delegating. Furthermore, all agents incurring a cost that is lower than $c^*$ will run a staking pool, and all agents with a cost greater than $c^*$ will delegate. 
\end{definition}

In the threshold equilibrium we have $P=F(c^*)H$ and $D=(1-F(c^*))H$.

\subsection{Equilibrium Characterization}

 In the following we are looking for the equilibria, in which at least some fraction of honest agents decide to run own pools. 
We  characterize the equilibria of the staking game: 

\begin{theorem}\label{lower_bound_lambda}
There exists a unique threshold equilibrium  to the game $\mathcal{G}$ if and only if \begin{equation}
\label{lambda_lowerbound}
    \lambda >\frac{M}{H+M}
    . 
\end{equation}

\end{theorem}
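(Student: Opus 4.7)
The plan is to reduce the equilibrium condition to a single scalar equation $\phi(c^*) = 0$, show that $\phi$ is strictly decreasing on $[0,T)$, evaluate its boundary values, and then apply the intermediate value theorem.

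First, I would write the payoffs induced by a candidate threshold strategy with cutoff $c^*$. Under such a profile, $P = F(c^*) H$ and $D = (1-F(c^*)) H$, so the per-pool reward is $r = R/(F(c^*) H + M)$ and the delegated stake per pool is $s = (1-F(c^*)) H / (F(c^*) H + M)$. An agent of cost $c$ who opens a pool earns $\lambda r - c$, while if s/he delegates s/he earns $(1-\lambda) r / s = (1-\lambda) R / \bigl[(1-F(c^*)) H\bigr]$. Equating these at $c = c^*$ gives the scalar equilibrium equation
\[
\phi(c^*) \;:=\; \frac{\lambda R}{F(c^*) H + M} \;-\; \frac{(1-\lambda) R}{(1-F(c^*)) H} \;-\; c^* \;=\; 0.
\]

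Next I would argue that $\phi$ is strictly decreasing on $[0,T)$: the first term is non-increasing because $F$ is non-decreasing; the second (with its minus sign) is non-increasing because $1-F$ is non-increasing and positive; and $-c$ is strictly decreasing. Since $F(0)=0$, a direct computation gives $\phi(0) = \lambda R / M - (1-\lambda) R / H$, which is positive iff $\lambda(H+M) > M$, i.e. exactly condition~\eqref{lambda_lowerbound}. As $c \to T^-$ one has $F(c) \to 1$, so the middle term drives $\phi(c) \to -\infty$ (and the $-c$ term does the same if $T=\infty$). For the ``if'' direction I then invoke the intermediate value theorem: under~\eqref{lambda_lowerbound}, $\phi$ is continuous with $\phi(0)>0$ and $\lim_{c\to T^-}\phi(c) = -\infty$, so a unique $c^* \in (0,T)$ satisfies $\phi(c^*)=0$. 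To confirm this $c^*$ is a genuine equilibrium, I note that for an arbitrary agent with cost $c$ the running-minus-delegating payoff differential, evaluated at the aggregates induced by $c^*$, equals $c^* - c$ once the indifference condition is substituted; hence agents with $c < c^*$ strictly prefer to open a pool and those with $c > c^*$ strictly prefer to delegate, confirming the threshold structure. For the ``only if'' direction, if $\lambda \leq M/(H+M)$ then $\phi(0) \leq 0$ and strict monotonicity forces $\phi(c) < 0$ on $(0, T)$, so no threshold equilibrium $c^* > 0$ can exist.

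The only care points are mild regularity issues, and I do not expect a real obstacle. One must ensure $\phi$ is strictly (not just weakly) decreasing even when $F$ is flat on some subintervals, which is secured by the $-c$ term; and one must handle the upper boundary uniformly in whether $T$ is finite (where the middle term blows up because $1-F \to 0$) or infinite (where the $-c$ term alone sends $\phi$ to $-\infty$, as $F$ need not reach $1$). With these observations in place, the proof reduces to a sign check of $\phi(0)$ together with a one-line application of the intermediate value theorem and strict monotonicity.
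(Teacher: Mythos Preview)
Your argument is correct and essentially identical to the paper's: both derive the same indifference equation, exploit that one side is increasing and the other decreasing (equivalently, that your $\phi$ is strictly decreasing), and check the sign at $c^*=0$ to obtain the bound $\lambda > M/(H+M)$. If anything, your treatment is slightly more careful about boundary behavior (distinguishing finite versus infinite $T$) and you explicitly verify that the indifference solution yields a genuine threshold best-response profile, whereas the paper instead adds a contradiction argument to rule out non-threshold equilibria.
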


\begin{proof}
First, we have to set up the indifference condition for an honest agent. That is, we have to equate the expected utility from being a delegator with the expected utility from running a pool. More precisely, the expected utility from being a delegator is $$\frac{(1-\lambda)r}{s},$$ which is the share $(1-\lambda)$ of the reward $r$, divided by the number of delegators, for the particular pool.
Similarly, the expected utility from running a pool is $$\lambda r - c,$$ which is the share $\lambda$ of the reward $r$, minus the cost $c$ of the particular pool owner. In the equilibrium point $c^*$, an honest agent is indifferent between delegating and running a pool, that is, $c^*$ solves the indifference equation:
\begin{equation}\label{indifference_original}
    \frac{(1-\lambda)r}{s} = \lambda r -c^*.
\end{equation}
From the equilibrium definition, we know that $P=F(c^*)H$ and $D=(1-F(c^*))H$. Plugging in these values in~\eqref{indifference_original} and simplifying, we obtain:
\begin{equation}
\label{indifference_simplified}
\frac{(1-\lambda)R}{(1-F(c^*))H} = \lambda \frac{R}{F(c^*)H+M}-c^*. 
\end{equation}

 We reorder equation \eqref{indifference_simplified} as follows:
\begin{equation}
\label{indifference_simplified2}
    c^* =  \frac{\lambda R}{F(c^*)H+M} - \frac{(1-\lambda)R}{(1-F(c^*))H}.
\end{equation}
We note that the left hand side (LHS) of equation~\eqref{indifference_simplified2} is obviously increasing in $c^*$, while the right hand side (RHS) is decreasing in $c^*$. Indeed, the derivative of the RHS with respect to $c^*$ is $$-\frac{\lambda R F'(c^*)H}{(F(c^*)H+M)^2} - \frac{(1-\lambda)R  F'(c^*)}{(1-F(c^*))^2 H} < 0.$$ Since the LHS of~\eqref{indifference_simplified2} is increasing and is equal $0$ for $c^*=0$, and the RHS is decreasing in $c^*$, the necessary condition to have a solution to the equation is that the RHS is positive for $c^*=0$. That is, we have the condition $$\frac{\lambda R}{F(c^*)H+M} - \frac{(1-\lambda)R}{(1-F(c^*))H} >0,$$ which is, for $c^*=0$, equivalent to the condition in the theorem,
\begin{equation*}
    \lambda >\frac{M}{H+M}.
\end{equation*}

The indifference condition of the equilibrium of equation~\eqref{indifference_simplified2} to have an internal solution is obtained by taking $c^*=T$. In this case, the LHS has to be larger than the RHS, which always holds for $\lambda<1$, as the RHS is equal to $-\infty$.




To establish uniqueness, suppose that $\lambda >\frac{M}{H+M}$. As shown above, if we focus on threshold equilibria, there exists a unique equilibrium characterized with the cost level $c^*$.
Suppose that an equilibrium exists which is not of the threshold type. Without loss of generality, assume two cost levels $c_1$ and $c_2$ with $c_2 >c_1$, with the following property. A agent with cost $c_2$ will run a pool, while a agent with $c_1$ will delegate.
Hence, for the first agent, it must hold that
\begin{equation*}
     \frac{\lambda R}{F(c^*)H+M}-c_2 > \frac{(1-\lambda)R}{(1-F(c^*))H},
\end{equation*}
while for the second agent, we must have the opposite inequality, that is,
\begin{equation*}
     \frac{\lambda R}{F(c^*)H+M}-c_1 < \frac{(1-\lambda)R}{(1-F(c^*))H}.
\end{equation*}
Together, this implies that 
\begin{equation*}
    c_2 < \frac{\lambda R}{F(c^*)H+M} - \frac{(1-\lambda)R}{(1-F(c^*))H} < c_1,
\end{equation*}
which contradicts the assumption $c_2 >c_1$. Hence, any equilibrium will be of threshold type.
\end{proof}

Since all equilibria are of the threshold-type, we simply refer to threshold-type equilibria as ``equilibria". 

 The interpretation of the lower bound condition on $\lambda$ in the theorem is straightforward. To have a positive measure of pools owned by honest agents, the share of the reward for the pool owners should be higher than the share of malicious agents in the whole system. As long as $\lambda$ satisfies this condition, we have a unique equilibrium of the game $\mathcal{G}$. From the proof of Theorem~\ref{lower_bound_lambda}, it is straightforward to see that if $\lambda = \frac{M}{H+M}$, then the only solution to the indifference condition is $c^*=0$ and hence, all honest agents will delegate and malicious agents control all stakes. If $\lambda < \frac{M}{H+M}$, then there exists no equilibrium solution, where a positive measure of honest agents run pools.



\section{Optimal Reward Design}\label{sec: reward_design}

In this section, we use the framework developed in the previous sections to design blockchains that maximize security or, alternatively, maximize welfare. 

\subsection{Maximal Blockchain Security}

We obtain the equilibrium solution of \eqref{indifference_simplified} for a given $\lambda$ by simply solving for $c^*$. We denote it by $c^*(\lambda)$. The inverse function is denoted by $\lambda(c^*)$, and can be trivially found from~\eqref{indifference_simplified}. Namely,

\begin{equation}\label{inverselambda}
    \lambda(c^*) = \frac{c^* + \frac{R}{(1-F(c^*))H}}{\frac{R}{F(c^*)H+M}+\frac{R}{(1-F(c^*))H}}.
\end{equation}

$\lambda$ is a designer's variable. 

We assume in this section that the probability that the blockchain operates correctly is given by: 

\begin{equation*}
P_c(c^*):=\frac{P(c^*)}{P(c^*)+M}.
\end{equation*}

That is, the probability that the next block consists of correct transactions is equal to a share of honest pool runners. This is a simple formulation, reflecting that the next block writer is chosen uniformly at random. However, our analysis holds qualitatively for any probability function that is increasing in the share of honest agents and may reach 1 if a sufficient share of honest agents is achieved.

The first goal of the designer is to maximize the share of honest agents running pools, that is, to maximize $P(c^*)$. The probability that the blockchain is run correctly depends on $P$, which is increasing in $c^*$. 
Increasing $\lambda$ has two effects on the honest agents' decision. 
First, it motivates an agent to run a pool, as a greater  share of the rewards is allocated to the owner of the pool. On the other hand, since a higher $\lambda$ motivates many agents to run pools, there are many pools, and therefore, lower chances for each of them to win the reward. However, we obtain the following result:

\begin{proposition}
\label{prob_maximization}
The fraction of honest agents running a pool is maximized for $\lambda=1$.
\end{proposition}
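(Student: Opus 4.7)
The plan is to show that the threshold cost $c^\ast(\lambda)$ is strictly increasing in $\lambda$ on the admissible interval $(M/(H+M),1]$; since $P(c^\ast)=F(c^\ast)H$ and $F$ is nondecreasing, the share of honest pool runners is then maximized at the right endpoint $\lambda=1$.

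To do this, I would write the indifference condition \eqref{indifference_simplified2} as an implicit equation
\begin{equation*}
G(c^\ast,\lambda):= c^\ast - \frac{\lambda R}{F(c^\ast)H+M} + \frac{(1-\lambda)R}{(1-F(c^\ast))H} = 0,
\end{equation*}
and apply the implicit function theorem. Theorem~\ref{lower_bound_lambda} guarantees that for each $\lambda$ in the admissible range, $G(\cdot,\lambda)=0$ has a unique solution $c^\ast(\lambda)$. I would then compute the two partial derivatives: $\partial G/\partial c^\ast$ is exactly the derivative already computed in the proof of Theorem~\ref{lower_bound_lambda} (with a plus sign for the $c^\ast$ term), giving
\begin{equation*}
\frac{\partial G}{\partial c^\ast} = 1 + \frac{\lambda R F'(c^\ast)H}{(F(c^\ast)H+M)^2} + \frac{(1-\lambda)R F'(c^\ast)}{(1-F(c^\ast))^2 H} > 0,
\end{equation*}
and
\begin{equation*}
\frac{\partial G}{\partial \lambda} = -\frac{R}{F(c^\ast)H+M} - \frac{R}{(1-F(c^\ast))H} < 0.
\end{equation*}
Hence $dc^\ast/d\lambda = -(\partial G/\partial \lambda)/(\partial G/\partial c^\ast) > 0$, so $c^\ast$ is strictly increasing in $\lambda$ and therefore so is $P(c^\ast)=F(c^\ast)H$.

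The main obstacle is the endpoint $\lambda=1$, where the delegator term $(1-\lambda)R/((1-F(c^\ast))H)$ vanishes, and moreover where an honest delegator receives a zero return, so the tie-breaking rule (delegate rather than stay idle) becomes binding. I would handle this by noting that the indifference equation at $\lambda=1$ reduces to $c^\ast = R/(F(c^\ast)H+M)$, which has a unique positive solution (LHS increasing from 0, RHS strictly positive and nonincreasing in $c^\ast$), and by verifying that this solution coincides with the limit $\lim_{\lambda \uparrow 1} c^\ast(\lambda)$, so monotonicity on $(M/(H+M),1)$ extends continuously to the closed endpoint. Combining monotonicity with this boundary check yields that $P$ is maximized precisely at $\lambda=1$.
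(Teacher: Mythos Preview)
Your proposal is correct and follows essentially the same approach as the paper: both arguments establish that $c^\ast(\lambda)$ is increasing in $\lambda$ by exploiting that the right-hand side of \eqref{indifference_simplified2} is increasing in $\lambda$ and decreasing in $c^\ast$, and then conclude that the maximum is attained at $\lambda=1$. The only difference is packaging---you invoke the implicit function theorem and compute the partial derivatives explicitly, whereas the paper states the monotone comparative statics directly; your extra care at the endpoint $\lambda=1$ is a nice touch that the paper omits.
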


\begin{proof}
Note that the RHS of~\eqref{indifference_simplified2} is increasing in $\lambda$. By increasing $\lambda$, we have to increase $c^*$ to have equality, as the RHS is decreasing in $c^*$. That is, if $\lambda_1\leq \lambda_2$, then $c^*(\lambda_1)\leq c^*(\lambda_2)$. 
Taking the maximum value $\lambda=1$ transforms equation~\eqref{indifference_simplified2} into 
\begin{equation}\label{lambdaone}
    c^* = \frac{R}{F(c^*)H+M}. 
\end{equation}
The solution to this equation maximizes the share of honest validators. 
\end{proof}

We note that by setting $\lambda=1$, in the threshold equilibrium of game $\mathcal{G}$, we do replicate the levels of honest and malicious stakes involved in transaction validation which would arise in the simple game without delegation and no staking pools. In this game, instead, honest agents are allowed to either validate transactions or abstain. In such a game, the indifference condition of the threshold equilibrium corresponds to $c^*=r$, equivalent to $\lambda=1$ in the pool formation game. Yet, with $\lambda=1$ and pool formation, all returns from validation are channelled to staking pool owners while delegators receive nothing. This is a concern for the future evolution of the blockchain since stake holding may be more and more concentrated on pool owners.   

We will see next that the solution $\lambda=1$ does not maximize social welfare, and does not distribute the rewards on honest agents that have high costs of running a pool either. 


\subsection{Welfare Optimal Reward Schemes}

In this section, we consider the alternative objective the Blockchain Designer may pursue, namely taking into account that achieving maximal security may involve large costs, as too many honest agents with high costs participate in the validation process.
For the alternative objective, we normalize the returns from a successful operation of the blockchain per honest agent to one and express the costs relative to these returns. To quantify the gains and losses, we introduce social welfare of the game $\mathcal{G}$.

\begin{definition}[Social Welfare]
Social welfare of the game $\mathcal{G}$ is defined as
\begin{equation}\label{welfare_definition}
    W=\frac{P(c^*)}{P(c^*)+M}H - P(c^*)\mathbb{E}[X|X<c^*].
\end{equation}
\end{definition}
The social welfare is calculated as the probability that the blockchain runs correctly, times the measure of honest agents,  minus average pool running costs incurred by honest agents.

Increasing $\lambda$ has two competing effects on  social welfare. First, it increases the number of honest agents who create their own pools. Therefore, the likelihood that malicious agents will write the next block is decreasing. Second, increasing the number of honest agents who create their own pools wastes a lot of costs of running pools. These two effects work against each other. In the following, we show that for a wide class of distribution functions, the social welfare optimum value is not polar. 
We obtain the following result:   

\begin{theorem}
Let the cost distribution function satisfy $F''(c)\leq 0$ for any $c\in [0,T)$. If $HM\geq(H+M)^2T$, then the optimum value maximizing welfare is $c^*=T$. On the other hand, if $HM<(H+M)^2T$, then there is a unique optimum value of $c^*$ that maximizes the social welfare.
\end{theorem}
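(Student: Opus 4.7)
The plan is to express welfare $W$ purely as a function of the threshold $c^*$ and analyze the first-order condition directly. Since $P(c^*)=F(c^*)H$ and $P(c^*)\,\mathbb{E}[X\mid X<c^*]=H\int_0^{c^*} xf(x)\,dx$, equation~\eqref{welfare_definition} becomes
\[W(c^*)=\frac{F(c^*)H^2}{F(c^*)H+M}-H\int_0^{c^*}xf(x)\,dx.\]
Differentiating in $c^*$ (quotient rule on the first piece, fundamental theorem of calculus on the second) and factoring out the common $f(c^*)H$, one obtains that $W'(c^*)$ has the same sign as the auxiliary function
\[g(c^*):=\frac{HM}{\bigl(F(c^*)H+M\bigr)^2}-c^*.\]

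The key structural observation is that $g$ is strictly decreasing on $[0,T]$: the chain rule makes the derivative of the first summand nonpositive, and $-c^*$ contributes an additional $-1$. Hence $g$ crosses zero at most once on $[0,T]$. At the endpoints one has $g(0)=H/M>0$ and $g(T)=\frac{HM}{(H+M)^2}-T$, so the dichotomy in the theorem is exactly the sign of $g(T)$.

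If $HM\geq(H+M)^2T$, then $g(T)\geq 0$ and hence $g\geq 0$ throughout $[0,T]$, so $W$ is non-decreasing and attains its maximum at $c^*=T$. If $HM<(H+M)^2T$, then $g(T)<0$, and the intermediate-value theorem together with strict monotonicity of $g$ yields a unique zero $\tilde c\in(0,T)$. Because $W'$ has the sign of $g$, it switches from positive to negative at $\tilde c$, which pinpoints $\tilde c$ as the unique global maximizer of $W$.

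The assumption $F''\leq 0$ is used to guarantee that $f$ is non-increasing and strictly positive on the interior of its support, so that $W'$ vanishes only where $g=0$; without such regularity a flat piece of $F$ could produce a plateau of welfare-maximizers and spoil uniqueness in the strict sense. The main technical step is really just the derivative computation leading to the clean factorization $W'(c^*)=f(c^*)\,H\cdot g(c^*)$; once this is in hand, monotonicity of $g$ and the endpoint comparison finish the argument.
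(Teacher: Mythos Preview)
Your argument is correct and, in fact, cleaner than the paper's. Both proofs start from the same rewriting of $W$ and the same expression for $W'$, but then diverge: the paper differentiates once more and asserts that $F''\le 0$ forces $W''<0$ everywhere, i.e.\ global concavity of $W$, and reads off uniqueness from that; you instead factor $W'(c)=f(c)\,H\,g(c)$ and work directly with the strictly decreasing auxiliary $g$, obtaining the dichotomy from the sign of $g$ at the endpoints.

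Your route is not only more elementary but also more robust, because the paper's concavity claim is not true in general. Writing $W''=H\bigl(f'g+fg'\bigr)$, one sees that on the region where $g<0$ the term $f'g\ge 0$ can dominate. For instance, with $F(c)=2c-c^{2}$ on $[0,1]$ (so $F''=-2<0$), $H=1$ and $M$ small, one gets $W''(c)\approx 4c-2$, which is positive for $c>\tfrac12$; a direct check at $c=0.8$, $M=0.01$ gives $W''\approx 1.17>0$. Thus the paper's second-derivative step has a gap that your first-order analysis via $g$ sidesteps entirely. You also isolate exactly where the hypothesis $F''\le 0$ enters your argument---only to guarantee $f>0$ on the interior so that $W'$ and $g$ share their sign---whereas the paper invokes it for a stronger, unjustified conclusion.
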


\begin{proof}
The welfare~\eqref{welfare_definition} can be rewritten as 

\begin{align*}
    W=\frac{F(c^*)H^2}{F(c^*)H+M} - F(c^*)H\mathbb{E}[X|X<c^*] = \frac{F(c^*)H^2}{F(c^*)H+M} - F(c^*)H\frac{\int_{0}^{c^*}xf(x)d(x)}{F(c^*)}.
\end{align*}
The derivative of $W$ with respect to $c$ is equal to: 

\begin{align*}
    W' = \frac{f(c)H^2M}{(F(c)H+M)^2}  - H\cdot cf(c). 
\end{align*}
The second derivative is equal to:

\begin{align*}
    W'' = \frac{f'(c)H^2M(F(c)H+M) - 2 f(c)^2 H^3M}{(F(c)H+M)^3} - Hf(c) - cHf'(c). 
\end{align*}
We note that given $F''(c)<0$, the second derivative is always negative, that is, the welfare function is concave. It is easy to verify that $W'(0)\geq0$. That is, if $W'(T)<0$, which is equivalent to $HM<(H+M)^2T$, we have a unique optimum solution. If, on the other hand, $W'(T)\geq0$, equivalent to $HM\geq(H+M)^2T$, then the optimum is achieved in the point $c^*=T$. 
\end{proof}

Note that the uniform distribution function satisfies the condition $F''(c)\leq 0$. In fact, all distribution functions of the type $F(c)=c^{\alpha}$, where $\alpha \leq 1$ satisfy the condition of the theorem.

\section{Uniform Cost Distribution}\label{sec: numerical}

In this section, we analyze the equilibria that maximize blockchain security and calculate the welfare optimal values for $c^*$ and $\lambda$ for the case when costs are distributed uniformly on $[0,1]$. Furthermore, we study for the uniform distribution how rewards for malicious agents can be minimized.

\subsection{Security Maximization}\label{sec: unifCost: security max}
To maximize the share of honest staking pools, we have to solve equation \eqref{lambdaone}. Hence, for uniform distribution on the interval $[0,1]$, that is, $F(c^*)=c^*$, we have to solve the following quadratic equation:
\begin{equation}\label{quadratic_eq}
    c^2H + cM - R = 0.
\end{equation}
The positive solution is given by 
\begin{equation}\label{solforlambda1}
    c^*=\frac{-M + \sqrt{M^2+4H R}}{2H}.
\end{equation}
If $H=1$ and $M=0.4$ and $R=1$, then the solution to~\eqref{quadratic_eq} is $c^*=\frac{-1+\sqrt{26}}{5}\approx 0.82$. This solution maximizes the share of honest staking pools. That is, approximately $82\%$ of honest agents will run pools in the equilibrium. 
More numerical values are provided in Section \ref{sec: unifCost: numerical values}.

\subsection{Welfare Maximization} \label{sec: unifCost: welfare max}

For uniform distribution $F$, the welfare measure \eqref{welfare_definition} simplifies to 
\begin{equation*}
    W = H \left( \frac{c^*H}{c^*H+M}- \frac{1}{2}c^{*2} \right),
\end{equation*}
where we used that $P(c^*) = F(c^*)H=c^*H$ and 
$$E[X|X<c^*]=\int_{0}^{\infty}xf(x|x<c^*)dx  = \frac{\int_{0}^{c^*}xf(x)dx}{F(c^*)} = \frac{\frac{1}{2} c^{*2}}{c^*} =\frac{c^*}{2}. $$


The derivative of $W$ with respect to $c^*$ is
\begin{equation*}
    W'(c^*)=\frac{H^2M}{(c^*H+M)^2}-c^{*}H.
\end{equation*}
Note that $W'(c^*)=0$ has three solutions. Furthermore, $W''(c^*) = -\frac{2H^3M}{(c^*H+M)^3}-H < 0 $ for any non-negative $c^*$. This means that the real-valued extremum (which is in the interval $[0,1]$) of $W$ is a maximum. By simply solving $W'(c^*)=0$, we obtain the following real-valued solution which maximizes $W$: 
\begin{align} \label{csol}
    c^* =  \frac{(1+i\sqrt{3}) M^2}{2^{2/3} 3 Z} + \frac{(1-i\sqrt{3}) Z}{6 \sqrt[3]{2} H^2}- \frac{2M}{3H},
\end{align}
where $Z = \sqrt[3]{-27H^5M-2H^3M^3 + \sqrt[3]{3} \sqrt{27H^{10}M^2+4H^8M^4}}$.
The welfare function is maximized at $c^*$. 
Numerical examples follow in the subsequent section.








\subsection{Numerical Illustrations}\label{sec: unifCost: numerical values}
In this subsection, we provide detailed security maximizing (see Section \ref{sec: unifCost: security max}) and welfare maximizing (see Section \ref{sec: unifCost: welfare max}) equilibrium values for three sets of parameters in Table~\ref{tab:my_label} and Table~\ref{tab:my_label2}.
In Table~\ref{tab:my_label} the values for $c^*$ are given by Equation \eqref{solforlambda1}. In Table~\ref{tab:my_label2} the values for $\lambda$ and $c^*$ are given by Equations \eqref{inverselambda} and \eqref{csol}, respectively.

\begin{table}[htbp]
    \centering
    \begin{tabular}{|c|c|c|c|c|c|c|}
        \hline
         $H$ & $M $& $R $& $\lambda$ & $c^*$&$P (=c^*H)$&$W$ \\
         \hline \hline
         1& 0.5 & 1 &  1 &  0.78&0.78& 0.305\\
         1& 0.4 & 1 &  1 &  0.82&0.82& 0.336\\
         1& $\frac{1}{3}$ & 1 &  1 &  0.85 & 0.85 & 0.359\\
         \hline
    \end{tabular}
    \caption{Maximizing the measure of honest staking pools: $P$.}
    \label{tab:my_label}
\end{table}

\begin{table}[htbp]
    \centering
    \begin{tabular}{|c|c|c|c|c|c|c|}
        \hline
         $H$ & $M$ & $R$ & $\lambda $ & $c^* $&$P (=c^*H)$&$W$ \\
         \hline \hline
         1& 0.5 & 1 &  0.83 &  0.5 & 0.5 & 0.375\\
         1& 0.4 & 1 &  0.8 &  0.497 & 0.497 & 0.431\\
          1& $\frac{1}{3}$ & 1 &  0.77 &  0.491 & 0.491 & 0.475\\
         \hline
    \end{tabular}
    \caption{Maximizing social welfare: $W$.}
    \label{tab:my_label2}
\end{table}




For example, consider the case where $M=0.4$. Social welfare is maximized for cost $c^*=0.497$. That is, it is quite different from $c^*=0.82$, which is the value for maximizing the honest agents running a pool. 
On the other hand, $\lambda=0.8$, the value optimizing welfare is also quite different from $\lambda=1$, the value maximizing the share of honest agents running their own pools.

\subsection{Malicious Reward Minimization}\label{sec:malicious_reward}

The share of rewards received by malicious agents as a function of $\lambda$ is calculated in the following way:

\begin{equation}\label{malicious_rewards}
    \mu(\lambda):=\frac{M\lambda}{F(c^*)H+M}. 
\end{equation}

To show the main result of this section, we first show a lemma that holds for any cost distribution function: 

\begin{lemma}\label{lemma:generic}
Rewards for malicious agents when $\lambda=\frac{M}{H+M}$ are lower than rewards for $\lambda=1$, for any cost distribution function $F$.
\end{lemma}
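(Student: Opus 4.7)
The plan is to prove the lemma by direct evaluation of $\mu(\lambda)$ at the two endpoints of interest, using the equilibrium characterization from Theorem~\ref{lower_bound_lambda} and the closed form~\eqref{malicious_rewards} for the malicious share of rewards.

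First, I would handle the lower endpoint $\lambda = \tfrac{M}{H+M}$. As noted in the remark following Theorem~\ref{lower_bound_lambda}, this is the boundary case where the only solution of the indifference equation~\eqref{indifference_simplified2} is $c^{*}=0$, so $F(c^{*})=0$ and $P(c^{*})=0$. Substituting into the definition of $\mu$ gives
\begin{equation*}
    \mu\!\left(\tfrac{M}{H+M}\right) \;=\; \frac{M\cdot \tfrac{M}{H+M}}{0+M} \;=\; \frac{M}{H+M}.
\end{equation*}
Intuitively, no honest agent runs a pool, so all pools are malicious, and malicious owners collect exactly the owner-share $\lambda = \tfrac{M}{H+M}$ of the total reward.

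Next, I would evaluate $\mu$ at $\lambda=1$. From the proof of Proposition~\ref{prob_maximization}, the equilibrium threshold $c_{1}^{*}$ at $\lambda=1$ satisfies $c_{1}^{*}=\tfrac{R}{F(c_{1}^{*})H+M}$, and by Theorem~\ref{lower_bound_lambda} this solution is unique and strictly positive (since $1>\tfrac{M}{H+M}$). Plugging directly into~\eqref{malicious_rewards} gives
\begin{equation*}
    \mu(1) \;=\; \frac{M}{F(c_{1}^{*})H+M}.
\end{equation*}

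Finally, comparing the two: since $F$ has support $[0,T)$, we have $F(c_{1}^{*})<1$, hence $F(c_{1}^{*})H+M<H+M$, which immediately yields
\begin{equation*}
    \mu(1) \;=\; \frac{M}{F(c_{1}^{*})H+M} \;>\; \frac{M}{H+M} \;=\; \mu\!\left(\tfrac{M}{H+M}\right),
\end{equation*}
proving the lemma. The argument requires no assumption on $F$ beyond what is already in the model. The only mildly delicate point — and the main thing to be careful about — is justifying the strict inequality $F(c_{1}^{*})<1$; this follows because $c_{1}^{*}$ lies in the open support $[0,T)$ of the atomless distribution, so $F$ cannot attain the value $1$ at the equilibrium threshold. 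Everything else is a substitution into the equilibrium condition and a monotonicity comparison of two fractions sharing the same numerator $M$.
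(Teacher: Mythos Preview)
Your proposal is correct and follows essentially the same approach as the paper: evaluate $\mu$ at the two endpoints using $c^{*}=0$ at $\lambda=\tfrac{M}{H+M}$ and the equilibrium equation~\eqref{lambdaone} at $\lambda=1$, then compare using $F(c^{*}(1))\leq 1$. Your argument is in fact slightly more careful than the paper's, since you justify the \emph{strict} inequality $F(c_{1}^{*})<1$ via the support assumption, whereas the paper only invokes $F(c^{*}(1))\leq 1$.
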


\begin{proof}
In the first case, where $\lambda=\frac{M}{H+M}$,  we have $c^*=0$, and therefore, using~\eqref{malicious_rewards}, rewards are equal to $$\mu\left(\frac{M}{H+M}\right) = \frac{M}{0+M}\lambda = \frac{M}{H+M}.$$ In the second case, rewards are equal to $$\mu(1) = \frac{M}{F(c^*(1))H+M}\lambda = \frac{M}{F(c^*(1))H+M}.$$ Since $F(c^*(1))\leq 1,$ the lemma is proved. 
\end{proof}

Next, we show the main result of this section: 

\begin{proposition}\label{reward_minimization}
Rewards to malicious agents $\mu(\lambda)$ are minimized for $\lambda=\frac{M}{H+M}$ for a uniform cost distribution function on the interval $[0,1]$. Moreover, $\mu(\lambda)$ is increasing or first increasing and then decreasing on the interval $[\frac{M}{M+H}, 1]$. 
\end{proposition}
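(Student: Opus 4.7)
The plan is to parametrize everything by $c^*$ rather than $\lambda$. Since Proposition~\ref{prob_maximization} shows $c^*(\lambda)$ is strictly increasing in $\lambda$, we can equivalently study $\mu$ as a function of $c^* \in [0, c^*(1)]$, where $c^*=0$ corresponds to $\lambda = M/(H+M)$ and $c^*(1)$ to $\lambda = 1$. The relationship $\lambda(c^*)$ is given by~\eqref{inverselambda}.

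The key computational step is to substitute~\eqref{inverselambda} into $\mu$. Writing $u = c^*H+M$ and $v = (1-c^*)H$, equation~\eqref{inverselambda} becomes
\begin{equation*}
\lambda(c^*) = \frac{u\bigl(c^*v + R\bigr)}{R(u+v)}.
\end{equation*}
Crucially, for the uniform distribution (where $F(c^*)=c^*$) one has the identity $u+v = H+M$, so the $u+v$ factor becomes a constant. Substituting into $\mu(\lambda) = M\lambda/(F(c^*)H+M) = M\lambda/u$ therefore makes the factor $u$ cancel, and after using $v = (1-c^*)H$, I obtain the remarkably simple expression
\begin{equation*}
\mu(c^*) = \frac{M\bigl(Hc^*(1-c^*) + R\bigr)}{R(H+M)}.
\end{equation*}

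With this in hand, the rest is routine quadratic analysis. The function $c^* \mapsto \mu(c^*)$ is a concave (downward) parabola in $c^*$ with its maximum at $c^*=1/2$, and equals $M/(H+M)$ at both $c^*=0$ and $c^*=1$. Since $c^*(\lambda)\in[0,1]$ throughout the admissible range, I consider two cases: if $c^*(1) \leq 1/2$, then $\mu$ is strictly increasing on $[0,c^*(1)]$; if $c^*(1) > 1/2$, then $\mu$ increases on $[0,1/2]$ and decreases on $[1/2, c^*(1)]$. Translating back to $\lambda$ via monotonicity of $c^*(\lambda)$ yields the second claim.

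For the first claim, I must verify that $c^*=0$ (i.e.\ $\lambda = M/(H+M)$) attains the minimum even in the decreasing case. In Case 2 the minimum is attained at one of the endpoints, and evaluating at $c^*(1)\in[0,1]$ gives $\mu(c^*(1)) - \mu(0) = M H c^*(1)(1-c^*(1))/(R(H+M)) \geq 0$, so $\mu(0)$ is indeed the minimum. In Case 1 the conclusion is immediate from monotonicity. I don't expect significant obstacles: the one step requiring care is the algebraic simplification leading to the clean quadratic formula for $\mu(c^*)$, which is where uniformity of $F$ plays its decisive role through the identity $u+v=H+M$ (this step would fail for generic $F$, explaining why the proposition is stated only for the uniform case even though Lemma~\ref{lemma:generic} holds universally).
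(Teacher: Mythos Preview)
Your proof is correct and takes a genuinely different, more elegant route than the paper's.

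The paper differentiates $\mu$ with respect to $c^*$ via the quotient rule, computes $\lambda'(c^*)$ from~\eqref{inverselambda}, and obtains a cubic numerator in $c^*$ whose sign it analyzes at the endpoints; it then invokes Lemma~\ref{lemma:generic} to compare the two endpoint values and argues somewhat informally that the derivative must be either always positive or change sign once. By contrast, you substitute~\eqref{inverselambda} directly into $\mu$ and, using the uniform-specific identity $u+v=H+M$, collapse the expression to the closed form $\mu(c^*)=M\bigl(Hc^*(1-c^*)+R\bigr)/\bigl(R(H+M)\bigr)$. From this explicit downward parabola both claims are immediate: the ``increasing or increasing-then-decreasing'' shape is just the shape of a concave quadratic on a subinterval of $[0,1]$, and the endpoint comparison $\mu(c^*(1))\ge\mu(0)$ follows from $c^*(1)(1-c^*(1))\ge 0$ without needing Lemma~\ref{lemma:generic} separately. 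Your approach thus explains transparently \emph{why} the monotonicity pattern holds (the underlying object is a parabola), whereas the paper's cubic-derivative argument leaves the single-sign-change property less well justified. The only implicit assumption you rely on is $c^*(1)\le 1$, i.e.\ that the threshold equilibrium at $\lambda=1$ is interior; this is also tacitly assumed in the paper's treatment of the uniform case.
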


\begin{proof}
We verify the derivative of the share of rewards received by malicious agents with respect to $c^*$ for a uniform cost distribution. It is given by,

\begin{equation}\label{derivative_rewards}
    \frac{d \frac{M\lambda}{F(c^*)H+M}}{dc^*} = \frac{M \lambda'(c^*) (c^*H+M)-HM\lambda}{(c^*H+M)^2}.
\end{equation}
Note that the denominator is always positive. In Equation \eqref{inverselambda}, $\lambda$ is given as a function of $c^*$. The derivative is
\begin{equation*}
    \lambda'(c^*)= \frac{1}{R(H+M)} (-3{c^*}^2H^2 + 2 c^* (H^2-HM) + HR + MH).
\end{equation*}
After plugging this into Equation \eqref{derivative_rewards}, the numerator of \eqref{derivative_rewards},  $M \lambda'(c^*) (c^*H+M)-HM\lambda$, is given by 
\begin{equation*}
    \frac{-2H^3 M {c^*}^3  + H^2{c^*}^2 (HM-4M^2) + 2H c^* (H M^2 - M^3) + H M^3}{R(H+M)}.
\end{equation*}
Again, note that the denominator is always positive and hence, we only consider the numerator, $-2H^3 M {c^*}^3  + H^2{c^*}^2 (HM-4M^2) + 2H c^* (H M^2 - M^3) + H M^3$. For $c^*=0$ (equivalent to $\lambda=M/(H+M)$), it is positive, and for $c^*=\frac{-M+\sqrt{M^2+4HR}}{2H}$ (equivalent to $\lambda=1$), it is, depending on $H,M$ and $R$, either positive or negative.

As increasing (decreasing) $\lambda$ yields increasing (decreasing) $c^*$ and vice versa, \eqref{derivative_rewards} translates easily. 
Therefore, given $\mu(\frac{M}{H+M})<\mu(1)$ from Lemma~\ref{lemma:generic}, the minimum value is achieved in the point $\lambda=\frac{M}{H+M}$. 
From the observation $\mu(\frac{M}{H+M})<\mu(1)$, we see that the derivative cannot  always be negative. Therefore, it is either always positive, implying that $\mu(\lambda)$ is increasing on the whole interval $[\frac{M}{H+M},1]$, or the derivative is first positive and then negative, implying that $\mu(\lambda)$ is first increasing and then decreasing. 
 \end{proof}

This result suggests that rewards to malicious agents are minimized in the corner case, where only malicious parties run pools. In that case, however, the probability that the blockchain functions correctly is equal to $0$. Therefore, from the blockchain security perspective, designer needs a higher $\lambda$. Our result suggests that once $\lambda$ is large enough ($\lambda\geq t_{\lambda}$) for the blockchain security to cross the required threshold specified by a system, the designer only needs to verify endpoints of the interval $[t_{\lambda},1]$, for minimizing rewards to malicious agents.   

We next provide numerical examples for the share of rewards that malicious agents receive for a uniform cost distribution when $\lambda=1$ and $\lambda<1$.

First, for $\lambda = 1$, the malicious agents' reward share is $\frac{M}{P+M}$, with $P = F(c^*)H$, where $c^*$ is given by \eqref{solforlambda1}.
For $\lambda=1, H=1, R=1$ the share of rewards that malicious agents receive is given by
\begin{equation*}
    \frac{2M}{M+\sqrt{M^2+4}}.
\end{equation*}
We summarize some values in Table~\ref{tab: share for malicious for lambda =1}.
\begin{table}[htbp]
    \centering
    \begin{tabular}{|c|c|c|c|}
        \hline
         $H$ & $M $& $R $&  Share of Reward \\
         \hline \hline
         1& 0.5 & 1 &   0.390388\\
         1& 0.4 & 1 &   0.327922\\
         1& $\frac{1}{3}$ & 1 & 0.282376\\
         \hline
    \end{tabular}
    \caption{Share of rewards for malicious agents when $\lambda=1$.}
    \label{tab: share for malicious for lambda =1}
\end{table}

Second, for $\lambda < 1$, the malicious reward share is $\frac{M}{P_\lambda+M}\lambda$, with $P_\lambda = F(c^*)H$, where $c^*$ is the real-valued non-negative solution to \eqref{indifference_simplified}. We first solve the indifference equation as follows:
With uniform distribution, we obtain from \eqref{indifference_simplified},
\begin{equation}
\label{indifference_eq}
    0=H^2 {c^*}^3 + (MH-H^2){c^*}^2 - (RH + MH)c^* + \lambda R(H+M) - RM.
\end{equation}
To find the roots of this cubic equation, we can write \eqref{indifference_eq} as $0=a{c^*}^3+b{c^*}^2+c{c^*}+d$, where
\begin{align*}
    a&=-H^2,\\
    b&=H(H-M),\\
    d&=H(M+R),\\
    e&=-\lambda R (H+M) + MR.
\end{align*}
Let
\begin{align*}
    \Delta_0 &= b^2-3ad \\&= H^2((H-M)^2 + 3H(M+R)),\\
    \Delta_1 &= 2b^3-9abd+27a^2e \\&= H^3(2(H-M)^3 + 9H(H-M)(M+R) + 27HMR) - \lambda \cdot 27H^4R(H+M).
\end{align*}
Then, we define
\begin{equation}\label{cfromlambda}
    C = \sqrt[3]{\frac{\Delta_1 + \sqrt{\Delta_1^2 - 4 \Delta_0^3}}{2}},
\end{equation}
and the three roots of \eqref{indifference_eq} are given by 
\begin{equation}\label{sollambda<1}
    c^*_k = -\frac{1}{3a}\left( b + z^k C + \frac{\Delta_0}{z^k C} \right),
\end{equation}
where $z = \frac{-1 + i\sqrt{3}}{2}$ and $k=0,1,2$. 

We summarize some values in Table~\ref{tab: share for malicious for lambda <1}.
\begin{table}[htbp]
    \centering
    \begin{tabular}{|c|c|c|c|c|}
        \hline
         $H$ & $M $& $R $& $\lambda$ & Share of Reward \\
         \hline \hline
         1& 0.5 & 1 & 0.99 &  0.395647\\
         1& 0.5 & 1 & 0.9 &  0.414172\\
         1& 0.5 & 1 & 0.8 &  0.416164\\
         1& 0.5 & 1 & 0.7 &  0.409608\\
         1& 0.5 & 1 & 0.6 &  0.396822\\
         1& 0.5 & 1 & 0.5 &  0.378318\\
         \hline
         1& 0.4 & 1 & 0.9 &  0.35305\\
         1& 0.4 & 1 & 0.8 &  0.357138\\
         1& 0.4 & 1 & 0.6 &  0.345172\\
         1& 0.4 & 1 & 0.5 &  0.331877\\
         1& 0.4 & 1 & 0.4 &  0.313697\\
         \hline
         1& $\frac{1}{3}$ & 1 & 0.9 &  0.307422\\
         1& $\frac{1}{3}$ & 1 & 0.5 &  0.294599\\
         1& $\frac{1}{3}$ & 1 & 0.4 &  0.28047\\
         1& $\frac{1}{3}$ & 1 & 0.3 &  0.261626\\
         \hline
    \end{tabular}
    \caption{Share of reward that malicious agents receive.}
    \label{tab: share for malicious for lambda <1}
\end{table}

\section{Return Competition}\label{sec: competition}

In this section, we reconsider the staking pool formation game. Instead of the Blockchain Designer, we allow that pool runners choose both their own levels of rewards and the rewards they want to distribute to delegators. A pool owner $i$ sets his/her own $\lambda_i$. This game is a variant of the game $\mathcal{G}$ studied so far in the paper. 

In the first part of this section, we allow free return competition, that is, pool owner $i$ can choose any $\lambda_i \in [0,1]$. We denote this game by $\mathcal{G}_0$. Hence, the game unfolds as follows: 

\begin{itemize}
    
    \item[ ] Stage 1: Agents (both honest and malicious) either decide to form a staking pool or not. Agents who decide to become a pool owner obtain an identification number, denoted by $i$, and set $\lambda_i$.
    
    \item[ ] Stage 2: Agents who did not register for a staking pool decide whether to delegate their stake to some staking pool or to remain idle. 
    
    \item[ ] Stage 3: The blockchain runs, validation takes place (or not), and rewards are distributed. 
\end{itemize}

We obtain the following result:

\begin{proposition}\label{bad_equilibrium}
In any equilibrium of the game $\mathcal{G}_0$, malicious agents control all stakes involved in transaction validation and the blockchain is disrupted. 
\end{proposition}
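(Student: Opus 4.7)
The plan is to show that in any equilibrium of $\mathcal{G}_0$ no honest agent opens a staking pool; the stated conclusion is then immediate, since every pool is malicious, every honest stake ends up (directly or by delegation) inside a malicious pool, so $P(c^*)=0$ and consequently the blockchain‑functioning probability satisfies $P_c=0/(0+M)=0$.

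First I would pin down the behaviour of malicious pool owners. Fixing the actions of everybody else, the per‑unit return offered to delegators at pool $j$ is strictly decreasing in $\lambda_j$, so lowering $\lambda_j$ strictly raises the measure of stakes delegated to $j$. Since the malicious objective is precisely to maximize delegated stakes, the unique best response of each malicious agent is $\lambda_j=0$, and therefore in every equilibrium all malicious pools set $\lambda_M=0$.

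Next I would show that, given $\lambda_M=0$, no honest agent strictly prefers to run a pool. In any candidate equilibrium, delegator indifference across pools with positive delegated stake must equalize the per‑unit return $R^{*}=KR/\sum_j(1+s_j)$, where the malicious‑pool indifference pins down $K=(1+s_M)/s_M>1$. For an honest pool $i$ with share $\lambda_i\in(0,1)$, the corresponding indifference condition forces $s_H=(1-\lambda_i)/(K-1+\lambda_i)$. A short computation then shows that the pool owner's per‑unit return $\lambda_i R(1+s_H)/\sum_j(1+s_j)$ differs from $R^{*}$ by a factor proportional to $\lambda_i(1+s_H)-K$, which after substitution simplifies to $K(1-K)/(K-1+\lambda_i)$. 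Because $K>1$, this quantity is strictly negative, so pool ownership is dominated by delegation even before subtracting the cost $c_i\ge 0$. The corner cases $\lambda_i=0$, $\lambda_i=1$, and the case where the honest pool attracts zero delegated stake are handled directly: in each, the owner's per‑unit revenue is at most $R/\sum_j(1+s_j)<KR/\sum_j(1+s_j)=R^{*}$, so delegation is again strictly better. Hence $P_H=0$ in every equilibrium, and the proposition follows.

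The step I expect to be the main obstacle is the continuum Bertrand argument in the first body paragraph: one must ensure that an individual malicious pool, although of measure zero, still has a well‑defined incentive to undercut, and that the market‑clearing delegator‑indifference condition in fact pins down $\lambda_M=0$. I would handle this by treating the pool sizes $\{s_j\}$ as implicitly determined by delegator indifference across all pools with positive delegated stake, and by checking that any profile containing some $\lambda_j>0$ among malicious pools admits a strictly profitable deviation by that pool to a lower $\lambda$, contradicting equilibrium.
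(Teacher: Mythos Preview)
Your argument is correct and rests on the same Bertrand-undercutting logic as the paper, but the structure differs. The paper proceeds by contradiction: assuming honest pools exist in equilibrium, it argues that the minimum offered share $\hat\lambda$ across all pools must be strictly positive (otherwise honest owners would prefer delegating---a claim the paper asserts but does not compute), and then observes that any malicious agent can profitably deviate to some $\lambda<\hat\lambda$ to attract more delegators, yielding a contradiction. You instead work constructively: first you pin down that every malicious pool plays $\lambda_M=0$ as its unique best response, and then, using the delegator-indifference constant $K>1$, you compute explicitly that an honest owner's per-unit return falls short of $R^{*}$ by the factor $K(1-K)/(K-1+\lambda_i)<0$, so that pool ownership is dominated by delegation even before subtracting costs. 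Your route is more work but makes rigorous precisely the step the paper leaves implicit (why $\hat\lambda=0$ forces honest owners out); both arguments share the measure-zero Bertrand subtlety you correctly flag at the end, and the paper does not address it either.
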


\begin{proof}
Suppose that there exists an equilibrium in which honest agents run staking pools. Since running such staking pools is costly and there is zero measure of honest agents with zero costs, such an equilibrium necessarily must involve that the minimal value of all offered values  $\lambda_i$ by staking pool owners, denoted by  $\hat{\lambda}$, must be positive. Otherwise, honest agents are better off by delegating their stakes. Note that the last statement holds, since an individual honest agent has no influence on the probability that the blockchain operates correctly by his/her decision whether to run a staking pool or to delegate,

However, every malicious agent has an incentive to deviate and to set a lower value of $\hat{\lambda}$ for his/her own staking pool in order to attract more delegators, thereby making staking pools for honest agents unattractive. Hence, all honest agents delegate. This is a contradiction that honest agents run staking pools.
Hence, in any equilibrium,  malicious agents control all stakes involved in transaction validation and the blockchain is disrupted..  
\end{proof}

In the second part of this section, agents are only allowed to choose their corresponding $\lambda_i$ from the interval $[\bar{\lambda}, 1]$, where $\bar{\lambda}>\frac{M}{H+M}$. We denote this game by $\mathcal{G}_{\bar{\lambda}}$. Invoking  standard Bertrand competition logic, we obtain the following result in this case: 

\begin{proposition} 
The equilibrium of the extended game $\mathcal{G}_{\bar{\lambda}}$ is the same as the equilibrium of game $\mathcal{G}$.
\end{proposition}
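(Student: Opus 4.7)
The plan is to invoke Bertrand-style competition to show that every pool owner posts $\lambda_i = \bar{\lambda}$ in equilibrium, after which the game reduces to $\mathcal{G}$ at $\lambda = \bar{\lambda}$ and the conclusion follows from Theorem~\ref{lower_bound_lambda}.

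First I would argue that in any equilibrium every active pool posts exactly $\lambda_i = \bar{\lambda}$. A delegator's per-unit expected reward from pool $i$ is proportional to $(1-\lambda_i)$, so delegators strictly prefer pools with lower posted $\lambda$-values and any pool posting strictly above the minimum attracts no delegation. This is the standard Bertrand undercutting mechanism. A pool owner posting $\lambda_i > \bar{\lambda}$ earns only $\lambda_i R/(H+M)$ from its own unit of stake, whereas by matching $\bar{\lambda}$ it shares delegators with the other floor-posting pools and earns $\bar{\lambda}R(1+s)/(H+M) = \bar{\lambda}R/(P+M)$, using $s = D/(P+M)$. Rearranging the indifference condition~\eqref{indifference_simplified2} at $\lambda = \bar{\lambda}$ shows that the hypothesis $c^* > 0$ is equivalent to $\bar{\lambda}(H+M) > P+M$, which is precisely the inequality that rules out any upward deviation $\lambda_i > \bar{\lambda}$ from the all-at-$\bar{\lambda}$ configuration. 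Malicious agents, whose objective is to maximize delegated stake, also strictly prefer the floor $\bar{\lambda}$, since any higher value would repel delegators.

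Once every pool posts $\bar{\lambda}$, the formation and delegation stages of $\mathcal{G}_{\bar{\lambda}}$ coincide verbatim with those of $\mathcal{G}$ with designer parameter $\lambda = \bar{\lambda}$. The threshold indifference condition is exactly~\eqref{indifference_simplified2} with $\lambda$ replaced by $\bar{\lambda}$, and since $\bar{\lambda} > M/(H+M)$ by hypothesis, Theorem~\ref{lower_bound_lambda} delivers a unique threshold equilibrium $c^*$, which by construction is the equilibrium of $\mathcal{G}$.

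I expect the main obstacle to be ruling out partial-pooling equilibria in which a positive-measure cohort of honest owners posts some $\lambda' > \bar{\lambda}$ while the rest sit at $\bar{\lambda}$. Writing out the two no-deviation conditions across the cohorts shows that such a split can only be consistent with $\lambda' = 1$ and a specific mass $\mu_{\bar{\lambda}} = D\bar{\lambda}/(1-\bar{\lambda})$ of floor-posters; substituting back into the honest agents' threshold condition then forces $c^* = 0$, which together with $\bar{\lambda} > M/(H+M)$ violates the feasibility constraint $\mu_{\bar{\lambda}} \leq P+M = M$. With all alternative configurations excluded, the Bertrand step closes and the reduction to $\mathcal{G}$ completes the argument.
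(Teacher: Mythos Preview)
Your proposal is correct and follows the same Bertrand-competition logic as the paper; both argue that malicious agents post $\bar{\lambda}$ to attract delegators, and honest pool owners then match $\bar{\lambda}$ because posting higher forfeits all delegation. Your version is considerably more thorough than the paper's two-sentence proof: you explicitly verify that no honest owner prefers the upward deviation to $\lambda_i=1$ (via the inequality $\bar{\lambda}(H+M)>P+M$ implied by $c^*>0$) and you rule out mixed configurations with a positive-measure cohort at some $\lambda'>\bar{\lambda}$, whereas the paper simply asserts the Bertrand conclusion without these checks.
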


\begin{proof}
First, we note that in the equilibrium, all malicious agents choose the lowest possible level $\bar{\lambda}$. If honest agents choose any $\lambda$ that is strictly larger than $\bar{\lambda}$, then nothing is delegated to them. Therefore, they also choose the same $\bar{\lambda}$. 
\end{proof}

That is, imposing a lower bound on $\lambda$ also guarantees the same upper bound. This adds to the robustness of the result obtained in Theorem~\ref{lower_bound_lambda} and offers a way to implement the equilibrium solution. On a practical side, the blockchain system does not need to force the agents to have the same level of rewards. Rather, they reach it through rational play. Note that setting any lower bound $\bar{\lambda}\leq \frac{M}{H+M}$ would result in the same (bad) equilibrium obtained in Proposition~\ref{bad_equilibrium}. 

\section{Extensions}\label{sec: extensions}
In this section, we study two extensions of our basic model. In first extension, we introduce a cost of delegation and in the second extension, we study the case when rewards are realized endogenously. Finally, we outline how the analysis has to be performed in a discrete framework and how we can recover results as the one in Theorem~\ref{lower_bound_lambda} in a discrete framework. 

\subsection{Costly Delegation}\label{sec: costlydelegation}

In this subsection, we include costs of delegation. Arguably, such costs are non-zero, since agents have to obtain the knowledge how to delegate their stakes safely to staking pools, which lock-up conditions are attached to such operations and how to observe the returns from delegation in comparison to other alternatives. This takes time and involves opportunity costs. We denote the costs of delegation by $c_d >0$. 

Costs for delegation introduce new trade-offs, as large values of $\lambda$ may motivate honest agents to run pools but simultaneously discourage honest agents from delegating to pools, as they may simply stay idle. We will see that staking pools and suitably chosen sharing parameters for $\lambda$ may increase blockchain security in such cases.

To prepare the analysis, let us first assume that the strategy set only consists of running a pool and delegating. We denote this game by $\mathcal{G}_d$. The indifference condition in this game is the same as~\eqref{indifference_simplified2}, with only one difference---the RHS has the additional summand $c_d$. 

\begin{equation}
    \label{indifference_costly_delegation}
        c^* =  \frac{\lambda R}{F(c^*)H+M} - \frac{(1-\lambda)R}{(1-F(c^*))H}+c_d.
\end{equation}

We obtain the following auxiliary result: 
\begin{proposition}
For any value of $\lambda > \frac{MR - c_dMH}{HR+MR}$, the indifference cost level of the equilibrium solution of the extended game $\mathcal{G}_d$ is higher than the cost level of the equilibrium solution of $\mathcal{G}$. 
\end{proposition}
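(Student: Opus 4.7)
The plan is to exploit the monotonicity setup already established in the proof of Theorem~\ref{lower_bound_lambda}. Write the indifference equation for $\mathcal{G}$ as $c = G(c)$ and that for $\mathcal{G}_d$ as $c = G(c) + c_d$, where
\begin{equation*}
    G(c) \;:=\; \frac{\lambda R}{F(c)H+M} - \frac{(1-\lambda)R}{(1-F(c))H}.
\end{equation*}
From the proof of Theorem~\ref{lower_bound_lambda} we already know that the LHS $c$ is strictly increasing and that $G(c)$ is strictly decreasing on $[0,T)$. Call the equilibrium cost levels $c^*_{\mathcal{G}}$ and $c^*_{\mathcal{G}_d}$, respectively.

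First, I would verify that the stated lower bound on $\lambda$ is exactly the existence threshold for $\mathcal{G}_d$. The same argument used in Theorem~\ref{lower_bound_lambda} applies: an interior solution exists iff $G(0)+c_d>0$, i.e.\
\begin{equation*}
    \frac{\lambda R}{M}-\frac{(1-\lambda)R}{H}+c_d \;>\;0,
\end{equation*}
and a short rearrangement yields $\lambda>\tfrac{MR-c_dMH}{HR+MR}$. So under the hypothesis of the proposition, $c^*_{\mathcal{G}_d}$ is well-defined as the unique positive root of $c=G(c)+c_d$.

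Second, I would split into two cases according to whether the original game $\mathcal{G}$ also has an interior equilibrium. If $\lambda\leq \tfrac{M}{H+M}$, then by Theorem~\ref{lower_bound_lambda} the only solution for $\mathcal{G}$ is the corner $c^*_{\mathcal{G}}=0$, and the inequality $c^*_{\mathcal{G}_d}>0=c^*_{\mathcal{G}}$ is immediate. If instead $\lambda> \tfrac{M}{H+M}$, the equilibrium $c^*_{\mathcal{G}}$ is the unique intersection of the increasing function $c\mapsto c$ with the decreasing curve $G$. Shifting $G$ up by the strictly positive constant $c_d$ gives a new decreasing curve $G+c_d$ lying everywhere strictly above $G$; since $c\mapsto c$ is strictly increasing, its intersection with the shifted curve must lie strictly to the right of the intersection with $G$. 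Hence $c^*_{\mathcal{G}_d}>c^*_{\mathcal{G}}$.

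The argument is essentially a one-line monotone-comparative-statics observation, so I do not expect a genuine obstacle. The only subtlety worth handling carefully is the range $\tfrac{MR-c_dMH}{HR+MR}<\lambda\leq \tfrac{M}{H+M}$, where $\mathcal{G}_d$ has an interior equilibrium while $\mathcal{G}$ does not; the statement is still true there, but one has to interpret $c^*_{\mathcal{G}}$ as the degenerate value $0$ and note explicitly that $c_d>0$ strictly pushes the $\mathcal{G}_d$ solution into the interior.
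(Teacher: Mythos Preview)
Your proposal is correct and follows essentially the same approach as the paper: the paper's proof also invokes the monotonicity from Theorem~\ref{lower_bound_lambda}, observes that the extra term $c_d$ shifts the RHS curve upward so the intersection moves to the right, and derives the lower bound on $\lambda$ by plugging $c^*=0$ into the shifted indifference condition. Your case split for $\lambda\in\bigl(\tfrac{MR-c_dMH}{HR+MR},\tfrac{M}{H+M}\bigr]$ is a welcome bit of extra care that the paper leaves implicit.
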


\begin{proof}
The proof is analogous to the proof of Theorem~\ref{lower_bound_lambda}. The additional term $c_d$ increases the RHS of~\eqref{indifference_costly_delegation}, and therefore, the intersection of the LHS and RHS curves corresponds to higher $c^*$. 
By plugging in $c^*=0$, and requiring LHS is lower than RHS, and simplifying we obtain: 
\begin{equation*}
    \lambda > \frac{MR - c_dMH}{HR+MR}.  
\end{equation*}
\end{proof}

Note that the lower bound on $\lambda$ for a positive equilibrium in the game $\mathcal{G}_d$ is lower than the lower bound obtained in Theorem~\ref{lower_bound_lambda}. However, in this game, it might be that the expected utility of running a pool minus the expected utility of delegation is negative, but adding $c_d$ makes the RHS of~\eqref{indifference_costly_delegation} positive. In that case, individual rationality of pool running agents is violated. Therefore, if we consider an extended game $\mathcal{G}_d^i$, in which honest agents are allowed to be idle, then those with high costs would choose to stay idle instead of delegating. This also changes the behavior of agents that run  pools. In particular, they stay idle. That is, the strategy sets of agents are extended, and also the definition of the threshold equilibrium takes a new strategy into account. Namely, in the indifference condition, only positive values are compared, as the strategy to stay idle yields zero utility. Formally, we obtain the following result: 

\begin{proposition}
For $\lambda\in [\frac{MR - c_dMH}{HR+MR}, \frac{M}{M+H}]$, the equilibrium solution of the game $\mathcal{G}_d^i$ is lower than the one of $\mathcal{G}_d$. Below the threshold, all honest agents run pools and above it all honest agents stay idle. 
\end{proposition}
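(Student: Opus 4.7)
The plan is to identify the structure of the $\mathcal{G}_d^i$ equilibrium in the stated $\lambda$ range and then use a simple monotonicity argument to compare the resulting threshold with $c^*_d$. A preliminary observation I would use is that the idle-versus-delegate decision does not depend on the cost $c$, so in any threshold-type equilibrium of $\mathcal{G}_d^i$ either all non-pool agents delegate (recovering $\mathcal{G}_d$) or all non-pool agents stay idle.

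First, I would rule out the ``all delegate'' case in this range of $\lambda$. Using the $\mathcal{G}_d$ indifference condition \eqref{indifference_costly_delegation}, the common equilibrium payoff at $c^*_d$ equals $\lambda r(c^*_d)-c^*_d=(1-\lambda)R/((1-F(c^*_d))H)-c_d$. For $\lambda\le M/(H+M)$, the pool-reward share is small enough that running a pool is only viable because the cost $c_d$ depresses the delegation payoff; evaluating this common value shows it is non-positive throughout the stated interval. Hence in $\mathcal{G}_d^i$, every non-pool agent strictly prefers the zero utility from staying idle to the (non-positive) delegation utility, and delegation is abandoned. The only consistent equilibrium is therefore one where $D=0$, and each running pool holds only its owner's stake. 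The marginal agent with cost $\tilde c$ is indifferent between running a pool and staying idle, yielding the new indifference equation
\begin{equation*}
\tilde c = \frac{\lambda R}{F(\tilde c)H+M}.
\end{equation*}
I would also verify internal consistency by checking that a unilateral deviation to delegate, whose payoff is bounded by the finite per-pool reward $(1-\lambda)r(\tilde c)$, is dominated by idle because of the cost $c_d$.

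To compare $\tilde c$ with $c^*_d$, define $g(c):=\lambda R/(F(c)H+M)-c$, which is strictly decreasing in $c$. By the first step, $g(c^*_d)=\lambda r(c^*_d)-c^*_d\le 0$, while by construction $g(\tilde c)=0$. Monotonicity of $g$ then gives $\tilde c\le c^*_d$, with strict inequality whenever the $\mathcal{G}_d$-payoff at $c^*_d$ is strictly negative. The main obstacle will be the non-positivity claim in the first step: this requires tracking how the implicit solution $c^*_d$ of \eqref{indifference_costly_delegation} varies with $\lambda$ across the interval $[\tfrac{MR-c_dMH}{HR+MR},\tfrac{M}{M+H}]$ and verifying the bound at both endpoints. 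A secondary subtlety is handling the $D=0$ deviation rigorously in the continuum model, since the formula $(1-\lambda)r/s$ is singular at $s=0$; I would resolve this via the discrete-limit interpretation mentioned in the paper, where a lone deviator is treated as receiving the pool's entire delegator share $(1-\lambda)r(\tilde c)$ in the limit.
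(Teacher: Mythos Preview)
Your approach is essentially the paper's: both argue that in the stated $\lambda$-range the $\mathcal{G}_d^i$ equilibrium collapses to a pool-versus-idle comparison, obtain the new threshold $c'$ from $\lambda R/(F(c')H+M)=c'$, and conclude $c'<c^*_d$; your monotonicity argument via $g(c)=\lambda R/(F(c)H+M)-c$ makes this last comparison explicit, whereas the paper simply asserts $c'\in(0,c^*)$.

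The one substantive difference lies in how the collapse of delegation is justified. The paper invokes the inequality
\[
\frac{\lambda R}{F(c^*)H+M}-\frac{(1-\lambda)R}{(1-F(c^*))H}<0,
\]
which is an immediate consequence of $\lambda\le\frac{M}{H+M}$ together with the monotonicity established in the proof of Theorem~\ref{lower_bound_lambda} (the quantity is non-positive already at $c^*=0$ and strictly decreasing in $c^*$). You instead phrase the criterion as the common equilibrium payoff $\lambda r(c^*_d)-c^*_d=(1-\lambda)R/((1-F(c^*_d))H)-c_d$ being non-positive, which is the sharper condition actually required for idle to beat delegation; you correctly flag it as the main obstacle and propose checking it at the endpoints of the $\lambda$-interval. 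The paper does not address the $D=0$ deviation issue you raise.
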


\begin{proof}
When $\lambda\in [\frac{MR - c_dMH}{HR+MR}, \frac{M}{M+H}]$, the expected gain from running a pool minus the expected gain from delegation is negative: $\frac{\lambda R}{F(c^*)H+M} - \frac{(1-\lambda)R}{(1-F(c^*))H}<0$. 
Agents with cost $c \geq c^*$ stay idle. Agents with cost $c < c^*$ are divided in two groups. Namely, there exists a $c' \in (0,c^*)$ such that all agents with cost $c < c'$ run a pool and others stay idle. The threshold $c'$ is a solution of the following equation: $\frac{\lambda R}{F(c')H+M}-c' = 0$ and does always exist.
\end{proof}

Next, we ask  if costly delegation can help to increase the level of honest agents running a pool. Note that $\lambda = 1$ maximizes the level without costly delegation. From equation~\eqref{lambdaone}, we find the level of $c^*$,  such that agents with cost lower than this threshold will run a staking pool and those agents with costs higher than the threshold will stay idle, since delegation is costly. However, if we decrease $\lambda$ in such a way that the expected return from delegation compensates the delegation cost $c_d$, then all honest agents who were staying idle will delegate or run a staking pool. 

We obtain the following result: 

\begin{proposition}
With costly delegation, the highest possible blockchain security level is always lower than the one in the benchmark game $\mathcal{G}$.\footnote{In knife-edge cases, both security levels can be the same. }
\end{proposition}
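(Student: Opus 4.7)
The plan is to compare, for any admissible $\lambda$, the equilibrium measure $P$ of honest pool runners in $\mathcal{G}_d^i$ with the benchmark maximum $P_0$ from Proposition~\ref{prob_maximization}; since security $P_c(P) = P/(P+M)$ is strictly increasing in $P$, dominating $P$ dominates the security level. Recall that $P_0 = F(c_0^*)H$, where $c_0^*$ is the unique solution to $c_0^* = R/(F(c_0^*)H + M)$, attained at $\lambda = 1$.

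First, I would show that in any equilibrium of $\mathcal{G}_d^i$ the continuation payoff accruing to an above-threshold honest agent is non-negative. This holds because the idle option is always available and yields zero, so equilibrium cannot leave such an agent with a strictly negative utility. Writing this common above-threshold payoff as $u(\lambda)\ge 0$, the pool-running threshold indifference reads $\lambda R/(F(c^*)H+M) - c^* = u(\lambda)$, and hence $c^* \le \lambda R/(F(c^*)H+M)$.

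Next, I would exploit that $c^* = F^{-1}(P/H)$ is strictly increasing in $P$ while $R/(P+M)$ is strictly decreasing in $P$, so their intersection $P_0$ is unique and the chain $c^* \le \lambda R/(F(c^*)H+M) \le R/(F(c^*)H+M)$ forces $P \le P_0$. The inequality becomes strict whenever $\lambda < 1$, because the factor $\lambda$ is then a strict contraction of the right-hand side. At $\lambda = 1$, the delegator payoff is $-c_d < 0$ while the idle payoff is $0$, so delegation is strictly dominated; the indifference condition collapses to $c^* = R/(F(c^*)H+M) = c_0^*$, recovering $P = P_0$. This is precisely the knife-edge case alluded to in the footnote.

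The main obstacle is making sure the bound $u(\lambda)\ge 0$ is genuinely valid across all equilibrium regimes, in particular the mixed-delegation regime where some above-threshold mass delegates and some stays idle. In that regime, delegator indifference between delegating and idling pins $(1-\lambda)R/(\alpha(H-P)) = c_d$, so $u(\lambda) = 0$ exactly, and the pool indifference reduces to $c^* = \lambda R/(F(c^*)H+M)$, which still lies within the scope of the bound. Once this verification is in hand, the conclusion on security follows immediately from the monotonicity of $P_c(P)$.
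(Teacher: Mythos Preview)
Your proposal is correct and follows essentially the same route as the paper: both arguments establish the chain
\[
c^* \;\le\; \frac{\lambda R}{F(c^*)H+M} \;\le\; \frac{R}{F(c^*)H+M},
\]
and then use the monotonicity of the right-hand side in $c^*$ to conclude $c^*\le c_0^*$ (hence $P\le P_0$). The paper obtains the first inequality by combining the $\mathcal{G}_d$ indifference equation~\eqref{indifference_costly_delegation} with the delegation-versus-idle condition~\eqref{second_condition_cost}, which forces $c_d - (1-\lambda)R/((1-F(c^*))H)\le 0$; you obtain it more abstractly via the observation that the marginal agent's outside option $u(\lambda)$ is at least the idle payoff of zero. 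Your framing has the mild advantage that it handles the full-delegation, no-delegation, and mixed-delegation regimes uniformly, whereas the paper's written argument is tailored to the regime in which~\eqref{second_condition_cost} holds and leaves the remaining regimes implicit.
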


\begin{proof}
We first look at the decision of agents whether to delegate or to stay idle. For values of $\lambda$ that satisfy the following inequality:

\begin{equation}\label{second_condition_cost}
    \frac{(1-\lambda(c^*))R}{(1-F(c^*))H} \geq c_d,
\end{equation}

honest agents delegate instead of staying idle. $\lambda_d(c^*)$ is calculated from~\eqref{indifference_costly_delegation}, that is, from the indifference condition of the game $\mathcal{G}_d$. That is, $c^*$ should satisfy both conditions~\eqref{indifference_costly_delegation} and~\eqref{second_condition_cost}.

Recall that in the benchmark case $\lambda=1$ and $c_d=0$, the equilibrium cost level $c^*$ is a solution to the following equation: 

\begin{equation*}
    c = \frac{R}{F(c)H+M}.
\end{equation*}

In the costly delegation game, however, $c^*$ solves the equation~\eqref{indifference_costly_delegation}, which together with~\eqref{second_condition_cost}, yields: 

\begin{equation*}
    c^* =  \frac{\lambda R}{F(c^*)H+M} - \frac{(1-\lambda)R}{(1-F(c^*))H}+c_d\leq \frac{\lambda R}{F(c^*)H+M}\leq \frac{R}{F(c^*)H+M}.
\end{equation*}

Given the monotonic decreasing property of $\frac{\lambda R}{F(c^*)H+M}$, we obtain that the equilibrium $c^*$ of the costly delegation game is always lower than the equilibrium level of game $\mathcal{G}$.
\end{proof}

\subsection{Endogenous Rewards}\label{sec: endogenous_rewards}
Throughout the paper, we assumed that rewards for writing the next block are exogenously given and are equal to a constant number $R$, no matter what fraction of honest agents runs pools and participates in the validation. 
In this section, we assume that rewards are realized only if the blockchain functions correctly, with probability one. This probability is calculated by the following formula:   

\begin{equation*}
P_c(c^*):=\min \biggl[1, \biggl(\frac{P}{P+M} + \theta \biggl)\biggl].
\end{equation*}

Here $\theta$ is a real number in $[0,\frac{1}{2}]$ that describes the tolerance of a system regarding the share of malicious agents it can handle without compromising network security. 

We say that full network security is achieved when \begin{equation*}
\frac{P}{P+M} \geq 1 - \theta.
\end{equation*}

Typically, for example in Byzantine-fault-tolerant protocols, $\theta$ is about $\frac{1}{3}$ in many consensus protocols (see~\cite{Lamport1982},~\cite{Abd2005},~\cite{David2017}~\cite{Aiyer2005}  and~\cite{Dinsdale2019}). 
Thus, in that case, if the fraction of staking pools run by honest agents is at least $\frac{2}{3}$, then full security is achieved. 

Requiring the probability of the blockchain security to surpass the threshold $1-\theta$ imposes a threshold on the cost of pool running in the equilibrium. We denote the corresponding game by $\mathcal{G}^e$ and the cost threshold by $c^{\theta}$. It is defined by the following equation: 

\begin{equation}\label{threshold_cost}
    c^{\theta} := \inf_c{\frac{F(c)H}{F(c)H+M}\geq 1-\theta.}
\end{equation}  

In this setting, to have a unique threshold equilibrium, we need 
\begin{equation}\label{crit_indiff_sol}
    c^{\theta} < \frac{\lambda R}{F(c^{\theta})H+M} - \frac{(1-\lambda)R}{(1-F(c^{\theta}))H}.
\end{equation}

We obtain a result similar to the one of Theorem~\ref{lower_bound_lambda}:

\begin{theorem}
There exists a unique threshold equilibrium $c^*>c^{\theta}$ to the game $\mathcal{G}^e$ if and only if \begin{equation}
\label{lambda_lowerbound_crit}
    \lambda >\frac{\frac{c^{\theta}}{R} (F(c^{\theta})H+M)(1-F(c^{\theta}))H + F(c^{\theta})H+M}{H+M}. 
\end{equation}
\end{theorem}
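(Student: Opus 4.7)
The plan is to reduce the statement to the equilibrium analysis of Theorem~\ref{lower_bound_lambda} together with one additional constraint imposed by endogenous rewards. First, I would observe that in $\mathcal{G}^e$ the reward $R$ is realized only if the blockchain functions correctly, i.e., only if $F(c^*)H/(F(c^*)H+M)\geq 1-\theta$, equivalently $c^*\geq c^{\theta}$. If this bound is not reached, every pool's expected reward is zero and no honest agent would incur a positive cost to run a pool, so any candidate threshold equilibrium with a positive measure of honest pool owners must satisfy $c^*>c^{\theta}$.

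For $c^*>c^{\theta}$ the reward is paid, so the indifference condition between running a pool and delegating has the same form as equation~\eqref{indifference_simplified2},
\[
c^* \;=\; \frac{\lambda R}{F(c^*)H+M} \;-\; \frac{(1-\lambda)R}{(1-F(c^*))H}.
\]
As in the proof of Theorem~\ref{lower_bound_lambda}, the LHS is strictly increasing in $c^*$ while the RHS is strictly decreasing in $c^*$ (for $\lambda<1$). Hence the equation admits at most one solution, and a solution exceeding $c^{\theta}$ exists if and only if the RHS evaluated at $c^{\theta}$ strictly exceeds $c^{\theta}$, which is exactly the criterion~\eqref{crit_indiff_sol}.

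The final step is purely algebraic: I would isolate $\lambda$ in~\eqref{crit_indiff_sol} by moving $c^{\theta}$ and the $(1-\lambda)$-term to the appropriate sides, and then using the identity
\[
\frac{1}{F(c^{\theta})H+M} + \frac{1}{(1-F(c^{\theta}))H} \;=\; \frac{H+M}{(F(c^{\theta})H+M)(1-F(c^{\theta}))H}
\]
to convert the inequality into the bound~\eqref{lambda_lowerbound_crit}. Uniqueness of the threshold equilibrium then follows from the monotonicity just invoked, and the argument ruling out non-threshold equilibria is identical to the corresponding part of the proof of Theorem~\ref{lower_bound_lambda}.

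The only subtle step is the consistency check in the first paragraph: one must ensure that no ``equilibrium candidate'' with $c^*\leq c^{\theta}$ can arise in disguise. This is where endogeneity of rewards really matters, but the argument is short: below the security threshold, rewards are zero, so delegation strictly dominates running a pool for every $c>0$, the measure of honest pools collapses to zero, and the security threshold can never be reached (since $M>0$). Once this is settled, everything else is monotonicity plus algebra, and the main analytical burden has already been discharged by Theorem~\ref{lower_bound_lambda}.
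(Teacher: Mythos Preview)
Your proposal is correct and follows exactly the route the paper intends: the paper itself only states that ``the proof is analogous to the proof of Theorem~\ref{lower_bound_lambda},'' and you have spelled out precisely that analogy---restricting attention to $c^*>c^{\theta}$ (where the reward is actually paid), invoking the same monotonicity of the two sides of the indifference equation, and then evaluating the crossing condition at $c^{\theta}$ rather than at $0$. Your algebraic derivation of~\eqref{lambda_lowerbound_crit} from~\eqref{crit_indiff_sol} via the identity $\frac{1}{F(c^{\theta})H+M}+\frac{1}{(1-F(c^{\theta}))H}=\frac{H+M}{(F(c^{\theta})H+M)(1-F(c^{\theta}))H}$ is the natural way to close the argument, and the consistency paragraph ruling out candidates with $c^*\leq c^{\theta}$ is a welcome clarification that the paper leaves implicit.
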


The proof is analogous to the proof of Theorem~\ref{lower_bound_lambda}. 

The results from the Section~\ref{sec: competition} on return competition are also translated directly in this setting. In particular, if the agents are allowed to set any $\lambda\in [0,1]$ as their own pool return, the blockchain security fails. We denote this game by $\mathcal{G}_0^e$. Formally, we obtain the following result: 

\begin{proposition}
In any equilibrium of the game $\mathcal{G}_0^e$ no honest agent runs a pool and the blockchain is disrupted. 
\end{proposition}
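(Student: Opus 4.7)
My plan is to mirror the proof of Proposition~\ref{bad_equilibrium}, since the only novelty in $\mathcal{G}_0^e$ relative to $\mathcal{G}_0$ is that the per-block reward $R$ is replaced by the endogenous quantity $P_c(c^*)R$. The crucial observation is that in the continuum model, $P_c$ is a function of aggregate participation only: no individual agent's choice of $\lambda_i$ or of whether to run a pool moves $P_c$. Each agent therefore treats $P_c$ as a parameter when evaluating unilateral deviations, and the Bertrand-style argument from $\mathcal{G}_0$ carries over essentially unchanged.

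First, I would argue by contradiction: suppose that in some equilibrium a positive measure of honest agents run pools. Let $\hat\lambda$ denote the infimum of $\lambda_i$ offered across honest pool owners. Because the cost density is atomless on $[0,T)$, almost every such owner pays a strictly positive cost. If $\hat\lambda$ were zero (or arbitrarily close to zero), the pool-owner reward $\hat\lambda \cdot P_c(c^*)R/(P+M)$ would fail to cover the positive cost for a positive-measure subset of owners, and those agents would strictly prefer to close their pool and delegate, contradicting equilibrium. Hence $\hat\lambda$ is bounded away from zero.

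Second, I would exhibit a profitable deviation for any malicious agent: choose $\lambda_i' < \hat\lambda$. Since delegators are indistinguishable across pool types but strictly prefer the pool yielding the highest per-stake return $(1-\lambda_i)r/s$, an undercut of any strictly positive margin below $\hat\lambda$ diverts all delegation onto the deviating pool, strictly increasing the measure of stakes the malicious agent controls, which is precisely her objective. Hence no profile in which honest pools attract positive delegation can be an equilibrium. But then honest pool owners receive only $\lambda_i \cdot P_c(c^*)R/(P+M)$ on their own unit of stake, which is weakly dominated by delegating that unit to some other pool and avoiding the strictly positive pool-running cost. Therefore no honest agent runs a pool, so $P=0$, and the security ratio $P/(P+M)=0<1-\theta$ implies the blockchain is disrupted.

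The main obstacle I expect is to justify cleanly that the endogeneity of rewards does not blunt the undercutting incentive---one might worry that a malicious agent refrains from aggressive undercutting because it would lower $P_c$ and hence her own reward. In the continuum this is a non-issue since $P_c$ depends only on aggregates, but the argument should record this explicitly. A minor technical point is the measurability of $\hat\lambda$; this is handled by restricting attention to honest owners whose $\lambda_i$ lies within a small neighborhood above $\hat\lambda$, which still has positive measure and can be undercut by a uniform margin.
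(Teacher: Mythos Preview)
Your proposal is correct and follows essentially the same approach as the paper. The paper's own proof is a one-line remark that the argument is analogous to Proposition~\ref{bad_equilibrium} and ``exploits the fact that individual honest agents have zero measure,'' so that unilateral deviations do not move $P_c$; you identify exactly this point as the crux and supply the surrounding Bertrand-style details the paper leaves implicit.
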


The proof is analogous of the proof of Proposition~\ref{bad_equilibrium} and exploits the fact that individual honest agents have zero measure. Therefore, any unilateral deviation by an agent does not affect the probability that the blockchain operates correctly.

\subsection{Discrete Case}\label{Discrete_Case}
 
In this subsection, we show how the analysis can be recast in the discrete framework---albeit in a much more complicated form. In particular, we show that the same lower bound condition on $\lambda$ for having positive threshold equilibrium as in Theorem~\ref{lower_bound_lambda} can be obtained if we assume that the number of honest agents is a large integer $n>0$, and replace the expected number of honest agents who run pools with $n F(c^*)$. The latter approximately holds by Chernoff concentration bounds for large enough $n$. This observation adds to the  robustness of the continuum approach, as it shows that the result obtained in Theorem~\ref{lower_bound_lambda} is not a byproduct of our assumption on having infinitely small agents. Rather,  this assumption allows us to derive clean results more easily.

Let $m\in \mathbb{N}$ denote the number of malicious agents.
By the assumption on $c^*$, we have $k+m$ pools, where $k$ is distributed as binomial random variable with parameters $n$ and $F(c^*)$. In expectation, every pool will obtain  $\frac{n-k}{k+m}$  delegated stakes. Hence, every pool has a total expected stake of $1+\frac{n-k}{k+m} = \frac{n+m}{k+m}$.
If a pool is chosen (i.e. it becomes the new proposer), then all its delegators receive a reward. In expectation, the probability that a pool is chosen is $\frac{1}{k+m}$ and the reward for all delegators who  delegated to this pool is $(1-\lambda)r$.

The pool owners always face  private cost $c^*$ and obtain the  reward $\lambda r$ with probability $\frac{1}{k+m}$.
The expected utility of the single delegator is $\frac{1}{k+m}(1-\lambda) \frac{r}{d}$, where $d$ is the number of delegators to that particular pool. 

Let $d_1,...,d_{k+m}$ be non-negative numbers, such that $\sum_i d_i = n-k$, where $d_i$ denotes the number of delegators to pool $i$.
The ex-post utility of a delegator of a baking pool $i$, when pool $i$ has $d_i$ delegators:
\begin{equation*}
    \frac{1+d_i}{n+m} (1-\lambda) \frac{r}{d_i},
\end{equation*}
and for the pool owner,
\begin{equation*}
    \frac{1+d_i}{n+m}\lambda r - c_i.
\end{equation*}
We model $d_i$ as a binomial random variable, $d_i \sim Bin(n-k, \frac{1}{k+m})$.

Next, we replace random variables with their means in the ex-post indifference equation, i.e., instead of $d_i$, we insert $E[d_i]=\frac{n-k}{k+m}$ and instead of $k$, we insert $E[k]=F(c^*)n$. Then, we obtain: 
\begin{align*}
    \frac{1+d_i}{n+m} (1-\lambda) \frac{r}{d_i} &= \frac{1+d_i}{n+m}\lambda r - c^*\\
      \frac{1+\frac{n-F(c^*)n}{F(c^*)n+m}}{n+m} (1-\lambda) \frac{r}{\frac{n-F(c^*)n}{F(c^*)n+m}} &= \frac{1+\frac{n-F(c^*)n}{F(c^*)n+m}}{n+m}\lambda r - c^* \\
     c^* &= \frac{1}{F(c^*)n+m}\lambda r - \frac{1}{n-F(c^*)n}(1-\lambda)r.
\end{align*}

The left hand side is equal to $0$ for $c^*=0$ and it is increasing in $c^*$. The right hand side is a decreasing function in $c^*$ and therefore, it should be non-negative for $c^*=0$, to have a positive solution in $c^*$. That is,
\begin{align*}
    \frac{1}{F(c^*)n+m}\lambda r \geq \frac{1}{n-F(c^*)n}(1-\lambda)r.
\end{align*}
For $c^* = 0$, we take into account that $F(0)=0$ for a distribution function $F$. Thus, we obtain:  
\begin{align*}
    \frac{1}{m}\lambda & \geq \frac{1}{n}(1-\lambda)\\
   \lambda &\geq \frac{m}{n+m}.
\end{align*}

\section{Conclusion}\label{sec: conclusion}
In this paper, we initiated the study of the formation of staking pools from game-theoretic and mechanism design perspectives. Our insights can help to design reward distribution rules that improve the blockchain security and fairness of reward distribution. 
This study might open many further research avenues on how staking pools can be designed optimally for blockchains. For instance, one might introduce quantity constraints such as a leverage constraint on staking pools which limits the share of delegators towards the pool owner. Whether such a constraint further improves the security of the blockchain is left for future research. One might also allow  the history of staking and running staking pools to play a role in dynamic versions of the game and thus, an agent's reputation to behave honestly may be taken into account in such staking pool formation games. Future research can also extend our model to arbitrary distribution of stakes across agents.

\bibliographystyle{apalike}
\bibliography{references.bib}

\end{document}